\newtheorem{theorem}{Theorem}
\newtheorem{lemma}{Lemma}
\newtheorem{definition}{Definition}
\newtheorem{corollary}{Corollary}
\let\emptyset\varnothing
\begin{document}

\title{Network Utility Maximization in Adversarial Environments}
\author{
\IEEEauthorblockN{Qingkai Liang and Eytan Modiano}
\IEEEauthorblockA{Laboratory for Information and Decision Systems\\Massachusetts Institute of Technology, Cambridge, MA}
\thanks{This work was supported by NSF Grant CNS-1524317 and by DARPA I2O and Raytheon BBN Technologies under Contract No. HROO l l-l 5-C-0097.}
}
\maketitle

\begin{tikzpicture}[remember picture, overlay]
\node at ($(current page.north) + (-3in,-0.5in)$) {Technical Report};
\end{tikzpicture}

\begin{abstract}
Stochastic models have been dominant in network optimization theory for over two decades, due to their analytical tractability. However, these models fail to capture non-stationary or even adversarial network dynamics which are of increasing importance for modeling the behavior of networks under malicious attacks or characterizing short-term transient behavior. In this paper, we consider the network utility maximization problem in adversarial network settings. In particular, we focus on the tradeoffs between total queue length and utility regret which measures the  difference in network utility between a causal policy and an ``oracle" that knows the future within a finite time horizon. Two adversarial network models are developed to characterize the adversary's behavior. We provide lower bounds on the tradeoff between utility regret and queue length under these adversarial models, and analyze the performance of two control policies (i.e., the Drift-plus-Penalty algorithm and the Tracking Algorithm).
\end{abstract}

\section{Introduction}
Stochastic network models have been dominant in network optimization theory for over two decades, due to their analytical tractability. For example, it is often assumed in wireless networks that the variation of traffic patterns and the evolution of channel capacity follow some stationary stochastic process, such as the i.i.d. model and the ergodic Markov model. Many important network control policies (e.g., MaxWeight \cite{tassiulas} and Drift-plus-Penalty policy \cite{neely2008fairness}) have been derived to optimize network performance  under those stochastic network dynamics.

However, non-stationary or even adversarial dynamics have been of increasing importance in recent years. For example,  modern communication networks frequently suffer from Distributed Denial-of-Service (DDoS) attacks or jamming attacks \cite{security-survey}, where traffic injections and channel conditions are controlled by some malicious entity in order to degrade network performance.
As a result, it is important to develop efficient control policies that optimize network performance even in adversarial settings. However, extending the traditional stochastic network optimization framework to the adversarial setting is non-trivial because many important notions and analytical tools developed for stochastic networks cannot be applied in adversarial settings. For example, traditional stochastic network optimization focuses on long-term network performance while in an adversarial environment the network may not have any steady state or well-defined long-term time averages. Thus, typical steady-state analysis and many equilibrium-based notions such as the network throughput region  cannot be used in networks with adversarial dynamics, and it is important to understand ``transient" network performance within a finite time horizon in a non-stationary/adversarial environment.

In this paper, we investigate efficient network control policies that can maximize network utility within a finite time horizon while keeping the total queue length small in an adversarial environment. In particular, we focus on the following optimization problem:
\begin{equation}\label{eq:intro}
\begin{split}
\max_{\alpha_t\in\mathcal{D}_{\omega_t}} \quad &\sum_{t=0}^{T-1} U(\alpha_t,\omega_t)\\
\text{s.t.}\quad & \sum_{t=0}^{T-1}a_i(t)\le \sum_{t=0}^{T-1}\tilde{b}_i(t),~\forall i
\end{split}
\end{equation}
where $U(\alpha_t,\omega_t)$ is the network utility gained in slot $t$ under the control action $\alpha_t$ (constrained to some action space $\mathcal{D}_{\omega_t}$) and the network event $\omega_t$ (which includes information about exogenous arrivals, link capacities, etc). The sequence of network events $\{\omega_t\}_{t=0}^{T-1}$ follows an arbitrary (possibly adversarial) process. The objective is to maximize the total network utility gained within a finite time horizon $T$ subject to the constraint that for each queue $i$ the total arrivals $\sum_{t=0}^{T-1}a_i(t)$ do not exceed the total departures $\sum_{t=0}^{T-1}\tilde{b}_i(t)$  during the time horizon. 
\subsection{Main Results}
We develop general adversarial network models and propose a new finite-time performance metric, referred to as \emph{utility regret} (the formal definition is given in Section \ref{sec:metric}):
\[
\mathcal{R}^\pi_T = \sum_{t=0}^{T-1} U(\alpha^*_t,\omega_t)- \sum_{t=0}^{T-1} U(\alpha^\pi_t,\omega_t),
\]
where $\{\alpha^\pi_t\}_{t=0}^{T-1}$ is the sequence of control actions taken by a policy $\pi$, and $\{\alpha^*_t\}_{t=0}^{T-1}$ is the optimal sequence of actions for solving \eqref{eq:intro} generated by an ``oracle" that knows the future. Note that a control policy $\pi$ may trivially maximize the network utility by simply ignoring the constraint in \eqref{eq:intro} (e.g., admitting all the exogenous traffic) such that the utility regret become zero or even negative\footnote{The negative utility regret may occur since any optimal solution $\{\alpha^*_t\}_{t=0}^{T-1}$ is required to satisfy the constraint in \eqref{eq:intro} while an arbitrary policy $\pi$ may violate this constraint.}. However, such an action may significantly violate the constraint in \eqref{eq:intro} and lead to large queue length. Therefore, there is a tradeoff between the utility regret and the queue length achieved by a control  policy, which is similar to the well-known utility-delay tradeoff in traditional stochastic network optimization \cite{neely-SNO}. In this paper, we investigate this tradeoff in an adversarial environment. The main results are as follows.
\vspace{1mm}

\noindent $\bullet$ We prove that it is impossible to simultaneously achieve both ``low" utility regret and ``low" queue length if the adversary is unconstrained. In particular, there exist some adversarial network dynamics such that either the utility regret or the total queue length grows at least linearly with the time horizon $T$ under any causal control policy. This impossibility result motivates us to study constrained adversarial dynamics.

\vspace{1mm}

\noindent $\bullet$  We develop two adversarial network models where the network dynamics are constrained to some ``admissible" set. In particular, we first consider the $W$-constrained adversary model, where under the optimal  policy the total arrivals do not exceed the total services within any window of $W$ slots. We then propose a more general adversary model called $V_T$-constrained adversary, where the total queue length generated by the ``oracle" during its sample path is upper bounded by $V_T$. By varying the values of $V_T$, the proposed $V_T$-constrained model covers a wide range of adversarial settings: from a strictly constrained adversary to a fully unconstrained adversary. 

\vspace{1mm}

\noindent $\bullet$  We develop lower bounds on the tradeoffs between utility regret and queue length under both the $W$-contrained and the $V_T$-constrained adversary models.  It is shown that no causal policy can simultanesouly achieve both sublinear utility regret and sublinear queue length if $W$ or $V_T$ grows linearly with $T$. We also analyze the tradeoffs achieved by two control algorithms: the Drift-plus-Penalty algorithm \cite{neely2008fairness} and the Tracking Algorithm  \cite{andrews-zhang-1, andrews-zhang-2} under the two adversarial models. In particular, both algorithms simultaneously achieve sublinear utility regret and sublinear queue length whenever $W$ or  $V_T$ grows sublinearly with $T$, yet the theoretical regret bound under the Tracking Algorithm is better than that under the Drift-plus-Penalty algorithm. The Tracking Algorithm also asymptotically attains the optimal tradeoffs under the $W$-constrained adversary model. 

\subsection{Related Work}
The study of adversarial network models dates back more than two decades ago. Rene Cruz \cite{PSM} provided the first concrete example of networks with adversarial dynamics, which were later generalized by Borodin \emph{et al.} \cite{AQT} under the \emph{Adversarial Queuing Theory} (AQT) framework. In AQT, in each time slot, the adversary injects a set of packets at some of the nodes. In order to avoid trivially overloading the system, the AQT framework imposes a stringent \emph{window constraints}: the maximum traffic injected in every link over any window of $W$ time slots should not exceed the amount of traffic that the link can serve during that interval. Andrews \emph{et al.} \cite{leaky-bucket} introduced a more generalized adversary model known as the \emph{Leaky Bucket} (LB) model that differs from AQT by allowing some traffic burst during any time interval. The AQT model and the LB model have given
rise to a large number of results since their introduction, most of which are about network stability under several simple scheduling policies such as FIFO (see \cite{AQT-survey} for a  review of these results).

However, the AQT and the LB models assume that only packet injections are adversarial while the underlying network topology and link states remain fixed. Such a static network model does not capture many adversarial environments, such as wireless networks under jamming attacks where the adversary can control the channel states. Andrews and Zhang \cite{andrews-zhang-1, andrews-zhang-2} extended the AQT model to single-hop dynamic wireless networks, where both packets injections and link states are controlled by an adversary, and prove the stability of the MaxWeight algorithm in this context. Jung \emph{et al.} \cite{max-weight-1, max-weight-2} further extended the results of \cite{andrews-zhang-1, andrews-zhang-2} to multi-hop dynamic networks. Our window-based $W$-constrained model is inspired by and similar to the adversarial models used in \cite{andrews-zhang-1, andrews-zhang-2, max-weight-1, max-weight-2}.


While the above-mentioned works focused on network stability, our work is most related to the universal network utility maximization problem by Neely \cite{neely-universal} where network utility needs to be maximized  subject to stability constraints under adversarial network dynamics. Algorithm (time-average) performance is measured with respect to a so-called ``$W$-slot look-ahead policy". Such a policy has perfect knowledge about network dynamics over the next $W$ slots but it is required that under this policy the total arrivals to each queue should not exceed the total amount of service offered to that queue during every window of $W$ slots. As a result, it is similar to our $W$-constrained model where stringent window constraints have to be enforced. 

Our paper expands previous work in a number of fundamental ways.
First, we develop lower bounds on the tradeoffs between utility regret and queue length under both the $W$-contrained and the $V_T$-constrained adversary models. As far as we know, none of the existing works (e.g., \cite{max-weight-1, max-weight-2, andrews-zhang-1, andrews-zhang-2,neely-universal}) provide lower bounds in any kind of adversarial network models.
Second, we provide analysis under the new $V_T$-constrained adversary model which generalizes the adversarial network dynamics models used by existing works.  To the best of our knowledge, existing works (e.g., \cite{max-weight-1, max-weight-2, andrews-zhang-1, andrews-zhang-2,neely-universal}) all use the $W$-constrained adversary model or similar windows-based variants due to its analytical tractability. In this paper, we propose a new $V_T$-constrained adversary model which gets rid of the window constrains. Due to the lack of window-based structure, the analysis carried out in existing works cannot be applied to the $V_T$-constrained model. We develop new analytical results under the new $V_T$-constrained model by converting the $V_T$-constrained model to a $W$-constrained model with a carefully selected window size $W$.
\subsection{Organization of this Paper}
The rest of this paper is organized as follows. We first introduce the system model and relevant performance metrics in Section \ref{sec:model}. 	We study the $W$-constrained and $V_T$-constrained adversary models in Sections \ref{sec:W} and \ref{sec:VT}, respectively. Finally, simulation results and conclusions are given in Sections \ref{sec:sim} and \ref{sec:conclusion}, respectively.
\section{System Model}\label{sec:model}
Consider a network with $N$ queues (the set of all queues are denoted by $\mathcal{N}=\{1,\cdots,N\}$). Time is slotted with a finite horizon $\mathcal{T}=\{0,\cdots,T-1\}$. Let $\omega_t$  denote the \emph{network event} that occurs in  slot $t$, which indicates the current network parameters, such as a vector of conditions for each link, a vector of exogenous arrivals to each node, or other relevant information about the current network links and exogenous arrivals. The set of all possible network events is denoted by $\Omega$.

At the beginning of each time slot $t$, the network operator observes the current network event $\omega_t$ and chooses a control action $\alpha_t$ from some action space $\mathcal{D}_{\omega_t}$ that can depend on $\omega_t$. The network event $\omega_t$ and the control action $\alpha_t$ together produce the service vector $\mathbf{b}(\alpha_t, \omega_t)\triangleq \mathbf{b}(t)=(b_1(t),\cdots,b_N(t))$ and the arrival vector $\mathbf{a}(\alpha_t,\omega_t)\triangleq \mathbf{a}(t)=(a_1(t),\cdots,a_N(t))$. Note that $a_i(t)$ includes both the admitted exogenous arrivals from outside the network to queue $i$, and the endogenous arrivals from other queues (i.e., routed packets from other queues to queue $i$). Thus, the above network model accounts for both single-hop and multi-hop networks, and the control action $\alpha_t$ may correspond to, for example, joint admission control, routing, rate allocation and scheduling decisions in a multi-hop network.
Let $\mathbf{Q}(t)=(Q_1(t),\cdots,Q_N(t))$ be the queue length vector at the beginning of slot $t$ (before the arrivals in that slot). The queueing dynamics are 
\[
Q_i(t+1)=[Q_i(t)+a_i(t)-b_i(t)]^+,~\forall i\in\mathcal{N},t\in\mathcal{T},
\]
where $[x]^+=\max\{x,0\}.$

We assume that the sequence of network events $\{\omega_t\}_{t=0}^{T-1}$ are generate according to an \emph{arbitrary} process (possibly non-stationary or even adversarial), except for the following boundedness assumption. Under any network event and any control action, the arrivals and the service rates in each slot are bounded by  constants that are independent of the time horizon $T$: for any $\omega_t\in\Omega$ and any $\alpha_t\in\mathcal{D}_{\omega_t}$
\[
0\le a_i(\alpha_t,\omega_t)\le A,~~0\le b_i(\alpha_t,\omega_t)\le B.
\]
For simplicity, we assume $B\ge A$ such that both arrivals and services are upper bounded by $B$ in each slot.

A policy $\pi$ generates a sequence of control actions $\big(\alpha^\pi_0,\cdots,\alpha^\pi_{T-1}\big)$ within the time horizon. In each slot $t$, the queue length vector, the arrival vector and the service rate vector under policy $\pi$ is denoted by $\mathbf{Q}^\pi(t)$, $\mathbf{a}^\pi(t)$ and $\mathbf{b}^\pi(t)$, respectively. A \emph{causal} policy is one that generates the current control action $\alpha_t$ only based on the knowledge up until the current slot $t$. In contrast, a \emph{non-causal} policy may generate the current control action $\alpha_t$ based on knowledge of the future.

Let $U(\alpha_t,\omega_t)$ be the network utility gained in slot $t$ if action $\alpha_t$ is taken under network event $\omega_t$. We assume that under any control action and any network event, network utility is bounded:
\[
U_{\min}\le U(\alpha_t,\omega_t)\le U_{\max},~\forall \omega_t\in\Omega, \alpha_t\in\mathcal{D}_{\omega_t}.
\]
A commonly-used form of the network utility function is $U(\alpha_t,\omega_t)=\sum_i U_i\big(x_i(t)\big)$ where $x_i(t)$ is the amount of admitted exogenous traffic to queue $i$ in slot $t$. Typical examples  include $U(\alpha_t,\omega_t)=\sum_i x_i(t)$ (total throughput), $U(\alpha_t,\omega_t)=\sum_i \log\big(x_i(t)\big)$ (proportional fairness), etc. In wireless networks with power control, another widely-used network utility function is $U(\alpha_t,\omega_t)=-\sum_i P_i(t)$ where $P_i(t)$ is the power allocated to queue $i$ in slot $t$. This utility function aims to minimize the total power consumption.

In this paper, we consider the following network utility maximization problem, referred to as \textbf{NUM}.
\vspace{-2mm}
\begin{framed}
\noindent \textbf{NUM:}\vspace{-3mm}
\small
\begin{align}
\max_{\alpha_t\in\mathcal{D}_{\omega_t}} \quad &\sum_{t=0}^{T-1} U(\alpha_t,\omega_t)\label{eq:ano1}\\
\text{s.t.}\quad & \sum_{t=0}^{T-1}a_i(t)\le \sum_{t=0}^{T-1}\tilde{b}_i(t),~\forall i\in\mathcal{N}\label{eq:ano2},
\end{align}\vspace{-4mm}
\end{framed}
\noindent where $\tilde{b}_i(t)=\min\{b_i(\alpha_t,\omega_t),Q_i(t)\}$ is the actual packet departures from queue $i$ in slot $t$. The objective \eqref{eq:ano1} is to maximize the total network utility gained in the time horizon. The constraint \eqref{eq:ano2} requires that the total arrivals to each queue  should not exceed the total amount of 
departures from that queue during the time horizon. 
Note that the above optimization problem is a natural analogue of the traditional stochastic network optimization problem \cite{neely-SNO}, where the time-average utility is maximized subject to certain network stability constraints. Indeed, if we consider a stochastic network with an infinite time horizon, then the objective \eqref{eq:ano1} is equivalent to maximizing time-average network utility, and the constraint \eqref{eq:ano2} requires that the time-average arrival rate to each queue should not exceed the time-average service rate, which is equivalent to rate stability\footnote{A network is rate-stable under a control policy $\pi$ if $\sum_{i} Q^\pi_i(T)\slash T\rightarrow0$ as $T\rightarrow\infty$.}. 

\subsection{Asymptotic Notations}
Let $f$ and $g$ be two functions defined on some subset of real numbers. Then $f(x)=O(g(x))$ if $\limsup_{x\rightarrow\infty} \frac{|f(x)|}{g(x)}<\infty$. Similarly,  $f(x)=\Omega(g(x))$ if $\liminf_{x\rightarrow\infty} \frac{f(x)}{g(x)}>0$. Also, $f(x)=\Theta(g(x))$ if $f(x)=O(g(x))$ and $f(x)=\Omega(g(x))$. In addition, $f(x)=o(g(x))$ if $\lim_{x\rightarrow\infty} \frac{f(x)}{g(x)}=0$, and in this case we say that $f(x)$ is sublinear in $g(x)$.
\subsection{Performance Metrics}\label{sec:metric}
Our objective is to find a causal control policy that can maximize the network utility while keeping the total queue length small. Note that a network with adversarial dynamics may not have any steady state or well-defined time averages. Hence, it is crucial to understand the transient behavior of the network, and the traditional equilibrium-based performance metrics may not be appropriate in an adversarial setting. As a result, we introduce the notion of \emph{utility regret} to measure the finite-time performance achieved by a control policy.
\begin{definition}[Utility Regret]
Given the time horizon $T$, the utility regret achieved by a policy $\pi$ under a sequence of network events $\omega_0,\cdots,\omega_{T-1}$ is defined to be
\begin{equation}\label{eq:R}
\small
\mathcal{R}^\pi_T\Big(\{\omega_0,\cdots,\omega_{T-1}\}\Big)=\sum_{t=0}^{T-1} U(\alpha^*_t,\omega_t)-\sum_{t=0}^{T-1} U(\alpha^\pi_t,\omega_t),
\end{equation}
where $\{\alpha^*_t\}_{t=0}^{T-1}$ is an optimal solution to \textbf{\emph{NUM}} generated by an ``oracle" that knows the entire sequence of network events $\{\omega_0,\cdots,\omega_{T-1}\}$ in advance. 
\end{definition}
\noindent In this setup, a policy $\pi$ is chosen and then the adversary selects the
sequence of network events $\{\omega_0,\cdots,\omega_{T-1}\}$ that maximize the regret. Intuitively, the notion of utility regret captures the worst-case utility difference between a causal policy and an ideal $T$-slot lookahead non-causal policy. 


Note that any optimal solution $\{\alpha^*_t\}_{t=0}^{T-1}$ to \textbf{NUM} is a utility maximizing policy subject to the constraint \eqref{eq:ano2} that it clears all the backlogs within the time horizon. 
A causal control policy may trivially  maximize the network utility  by simply ignoring the stability constraint \eqref{eq:ano2} (e.g., admitting all the exogenous traffic) such that the utility regret become zero or even negative. However, such an action may significantly violate the stability constraint \eqref{eq:ano2} and lead to large total queue length. As a result, there is a tradeoff between the utility regret and the total queue length achieved by a causal control policy. 

A desirable first order characteristic of a ``good" policy $\pi$ is that it simultaneously  achieves sublinear utility regret and sublinear queue length w.r.t. the time horizon $T$, i.e., $\mathcal{R}^\pi_T=o(T)$ and $\sum_{i} Q^\pi_i(T)=o(T)$. Sublinear utility regret guarantees  that $\mathcal{R}^\pi_T\slash T\rightarrow 0$ as the time horizon $T\rightarrow\infty$, meaning that the time-average utility gained under policy $\pi$ asymptotically approaches that under the optimal non-causal policy. In other words, the long-term time-average utility is optimal. Sublinear queue length ensures  $\sum_{i} Q^\pi_i(T)\slash T\rightarrow 0$ as $T\rightarrow\infty$, which is equivalent to rate stability. Note that simultaneously achieving sublinear utility regret and sublinear queue length is equivalent to maximizing long-term time-average utility subject rate stability, which is  the goal of traditional stochastic network optimization \cite{neely-SNO}.

Note that simultaneously achieving sublinear utility regret and sublinear queue length is just a coarse-grained requirement for a ``good" tradeoff between utility regret and queue length. In an adversarial setting with no steady state, the fine-grained growth rates of utility regret and queue length are equally important and should also be well balanced. A better tradeoff in terms of their growth rates implies that the policy has a better learning ability and can adapt to the adversarial environment faster.

Unfortunately, the following theorem shows that in general no causal policy can simultaneously achieve both sublinear utility regret and sublinear queue length.
\begin{theorem}\label{thm:impossible}
For any causal policy $\pi$, there exists a sequence of network events $\omega_0,\cdots,\omega_{T-1}$ such that either the utility regret $\mathcal{R}^\pi_T\Big(\{\omega_0,\cdots,\omega_{T-1}\}\Big)=\Omega(T)$ or the total queue length  $\sum_i Q^\pi_i(T)=\Omega(T)$.
\end{theorem}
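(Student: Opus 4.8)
The plan is to exhibit a simple adversarial construction — essentially a single queue with a binary choice of "good" versus "bad" network events — on which every causal policy is forced into one of two bad outcomes. The adversary will use the fact that a causal policy cannot distinguish, at the time it must act, between a future in which admitting traffic is safe (because ample service will arrive later) and a future in which it is not (because no service ever arrives). Whatever the policy commits to early, the adversary picks the continuation that punishes that commitment.

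**Concretely, I would proceed as follows.** First, reduce to a one-queue network ($N=1$) with utility $U(\alpha_t,\omega_t) = x_t$, the amount of exogenous traffic admitted in slot $t$, so $U_{\max}$-scale utility is just throughput. Define two network events: $\omega^{A}$ offers $A>0$ arrivals available for admission and zero service capacity, and $\omega^{B}$ offers zero arrivals and $B$ units of service capacity. Second, I would split the horizon into a first phase of length $T/2$ in which the adversary always presents $\omega^{A}$, and a second phase of length $T/2$ whose content depends on the policy's behavior in phase one. Let $S^\pi = \sum_{t=0}^{T/2-1} x^\pi_t$ be the total traffic the causal policy admits during phase one; note $0 \le S^\pi \le AT/2$. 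Third, the adversary branches on a threshold, say $S^\pi \ge AT/4$ versus $S^\pi < AT/4$:

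\begin{itemize}
\item[(i)] If $S^\pi \ge AT/4$ (the policy was "aggressive"), the adversary sets every event in phase two to $\omega^{A}$ as well — no service ever appears. The oracle, knowing this, must admit nothing at all (any admitted packet violates \eqref{eq:ano2} since total service is $0$), so $\sum_t U(\alpha^*_t,\omega_t)=0$ and in fact the only feasible oracle action is to admit nothing, giving regret $\le 0$; but the queue built by $\pi$ is $Q^\pi(T) \ge S^\pi \ge AT/4 = \Omega(T)$ because no service ever departs it.
\item[(ii)] If $S^\pi < AT/4$ (the policy was "timid"), the adversary sets every event in phase two to $\omega^{B}$: a total of $BT/2$ units of service become available and no further arrivals occur. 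Now the oracle can admit all $AT/2$ units offered in phase one (feasible since $AT/2 \le BT/2$ as $B\ge A$), collecting utility $AT/2$, whereas $\pi$ collected less than $AT/4$ in phase one and can collect nothing in phase two (no arrivals there), so $\mathcal{R}^\pi_T \ge AT/2 - AT/4 = AT/4 = \Omega(T)$.
\end{itemize}

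In either branch one of the two quantities is $\Omega(T)$, which is exactly the claim. I would then note the argument is worst-case over the adversary's choice after $\pi$ is fixed, matching the order of quantifiers in the definition of utility regret, and remark that the same construction works for fairness-type utilities with only cosmetic changes (replace "admit all" by "admit at the symmetric point").

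**The main obstacle** — really the only delicate point — is pinning down the oracle's value cleanly in each branch, in particular verifying that in branch (i) the oracle genuinely cannot gain anything (so that $\pi$'s timidity is not retroactively justified) and that in branch (ii) the oracle's admit-everything schedule is actually feasible under the queueing dynamics and the constraint $\sum_t a_i(t) \le \sum_t \tilde b_i(t)$ — one must check that the $\tilde b$ on the right-hand side, which is $\min\{b_i,Q_i\}$ rather than raw capacity, still sums to at least $AT/2$ when the oracle has deliberately front-loaded a backlog of $AT/2$ into the queue before the service phase. A secondary bookkeeping issue is handling the endpoint/parity of $T/2$ and folding the boundedness constants $A, B, U_{\max}$ into the hidden constant in $\Omega(T)$, but that is routine.
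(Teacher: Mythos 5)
Your construction is correct, and the skeleton is the same as the paper's (a two-phase adversary that fixes the first half of the horizon, observes what the causal policy committed to, and then chooses the second half to punish that commitment), but the concrete dilemma you exploit is genuinely different. The paper uses a two-link network with a one-link-per-slot scheduling constraint: both links receive arrivals in phase one, the adversary then kills the capacity of whichever link was served less (using $\min\{B^\pi_1,B^\pi_2\}\le T/2$), and the oracle avoids any backlog by serving the doomed link first; the conclusion is packaged as a single inequality $\mathcal{R}^\pi_T + c\sum_i Q^\pi_i(T)\ge cT/2$, which requires a utility with subderivatives bounded below by $c>0$ to convert unadmitted traffic into regret. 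You instead use a single queue and an admission dilemma: branch on the total amount $S^\pi$ admitted in phase one, continue with no service if the policy was aggressive (queue $\ge AT/4$) and with ample service if it was timid (regret $\ge AT/4$), giving the either-or conclusion directly by a case split rather than via a weighted-sum bound. Your version is simpler (no scheduling constraint, linear utility suffices since the theorem only needs one witness instance) and your flagged verification points are exactly the right ones and do go through: in the aggressive branch the constraint $\sum_t a(t)\le\sum_t\tilde b(t)=0$ forces the oracle to admit nothing, and in the timid branch the oracle's front-loaded backlog of $AT/2$ drains completely because $B\ge A$ makes $\sum_t\tilde b(t)=AT/2$. The paper's two-link construction buys something extra, though: essentially the same instance is reused (with $T$ replaced by $W$) to prove the $W$-constrained lower bound of Theorem \ref{thm:lower-AQT}, where the oracle must keep queues bounded within every window, which your single-queue oracle (which deliberately builds a $\Theta(T)$ backlog) cannot do; so your argument proves Theorem \ref{thm:impossible} but would need the paper's scheduling-based variant to extend to the constrained models.
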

\begin{proof}
We prove this theorem by considering a specific one-hop network with 2 users and constructing a sequence of adversarial network dynamics such that either the utility regret or the total queue length grows at least linearly with the time horizon $T$. More specifically, the time horizon is split into two parts. In the first $T\slash 2$ slots, the adversary just generates some regular network events, let the policy run and observes the queue lengths of the two users. In the remaining $T\slash 2$ slots, the adversary sets the capacity to zero for the user with a longer queue and creates sufficient capacity for the other user such that the performance of the causal policy is significantly degraded while the ``oracle" can still perform very well. See Appendix \ref{ap:impossible} for details.
\end{proof}

\vspace{1mm}

Theorem \ref{thm:impossible} shows that is it impossible to achieve sublinear utility regret while maintaining sublinear queue length, if the adversary has unconstrained power in determining the network dynamics. As a result, in the following two sections, we develop two adversary models where the sequence of network events (i.e. network dynamics) that the adversary can select is constrained to some ``admissible" set.  In Section \ref{sec:W}, we consider the $W$-constrained adversary model that is an extension of the widely-known yet very stringent model used in Adversarial Queueing Theory. Next in Section \ref{sec:VT}, we develop a more relaxed adversary model called the $V_T$-constrained adversary. Lower bounds on the tradeoffs between utility regret and queue length as well as the performance of some commonly-used algorithms are analyzed under the two adversary models.
\section{$W$-Constrained Adversary Model}\label{sec:W}
In this section, we investigate the $W$-constrained adversary model which is an extension of the classical Adversarial Queueing Theory (AQT) model \cite{AQT}. It has  stringent constraints on the set of admissible network dynamics that the adversary can set, yet is analytically tractable, which facilitates our subsequent investigation of a more relaxed adversary model in Section \ref{sec:VT}. We first give the definition of $W$-constrained network dynamics.
\begin{definition}[$W$-Constrained Dynamics]
Given a window size $W\in[1,T]$, a sequence of network events $\omega_0,\cdots,\omega_{T-1}$ is $W$-constrained if
\begin{equation}\label{eq:AQT}
\sum_{\tau=t}^{t+W-1} a^*_i(\tau)\le \sum_{\tau=t}^{t+W-1} b_i^*(\tau),~\forall i\in \mathcal{N},t\in\mathcal{T},
\end{equation}
where $\big\{\mathbf{a}^*(t)\big\}_{t=0}^{T-1}$ and $\big\{\mathbf{b}^*(t)\big\}_{t=0}^{T-1}$ is the optimal solution to \textbf{\emph{NUM}} under the above sequence of network events.
\end{definition}
\noindent Note that if there exist multiple optimal solutions to \textbf{NUM}, then constraint \eqref{eq:AQT} is only required to be satisfied by any one of them. Any network satisfying the above is called a \textbf{$W$-constrained network}. In other words, under the optimal (possibly non-causal) policy, the total amount of arrivals to each queue does not exceed  the total amount of service offered to that queue during any window of $W$ slots. 

Denote by $\mathcal{W}_T$ the set of all sequences of network events $\{\omega_0,\cdots,\omega_{T-1}\}$ that are $W$-constrained. Then the \textbf{$W$-constrained adversary} can only select the sequence of network events from the constrained set $\mathcal{W}_T$. 

In the following, we first provide a lower bound on the tradeoffs between utility regret and  queue length under the $W$-constrained adversary model (Section \ref{sec:lower-W}), and then analyze the tradeoffs achieved by several common control policies (Section \ref{sec:alg-W}). Note that throughout this section we mainly focus on the dependence of utility regret and queue length on $W$ and $T$ while \textbf{treating the number of users $N$ a constant}.
\subsection{Lower Bound on the Tradeoffs}\label{sec:lower-W}
The following theorem provides a lower bound on the tradeoffs between utility regret and  queue length under the $W$-constrained adversary model.
\begin{theorem}\label{thm:lower-AQT}
For any causal policy $\pi$, there exists a sequence of network events $\{\omega_0,\cdots,\omega_{T-1}\}\in\mathcal{W}_T$ such that 
\[
\mathcal{R}^\pi_T\Big(\{\omega_0,\cdots,\omega_{T-1}\}\Big)+c\sum_i Q_i^\pi(T)\ge c'W,
\]
where $c'>0$ is some constant.
\end{theorem}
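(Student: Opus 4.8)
The plan is to adapt the two-phase adversarial construction behind Theorem~\ref{thm:impossible}, but to \emph{compress} the entire ``trap'' into one block of $W$ consecutive slots instead of the whole horizon, and then to verify that the oracle's optimal solution on that block obeys the window constraint \eqref{eq:AQT}. I work with the same one-hop network with $N=2$ users, throughput utility $U(\alpha_t,\omega_t)=x_1(t)+x_2(t)$, and $A=B=1$; outside the trap block every event has no exogenous arrivals and no service, so it contributes nothing to either player. Assume $W$ is even; the trap occupies slots $0,\dots,W-1$, split into Phase~1 (slots $0,\dots,W/2-1$) and Phase~2 (slots $W/2,\dots,W-1$).

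In Phase~1 each user receives one exogenous packet per slot, and the control action must split a single unit of service rate between the two queues, so at most $W/2$ packets can depart during all of Phase~1. The adversary observes the causal policy's backlogs at the end of Phase~1, lets $j\in\{1,2\}$ be the user with the larger backlog and $k$ the other, and in Phase~2 jams queue~$j$ (so $b_j(t)=0$), gives queue~$k$ unit service rate, and injects no further arrivals. Facing this fixed event sequence the oracle can admit \emph{all} $W$ exogenous packets --- which is optimal, since only $W$ packets ever arrive: it spends all of Phase~1's service on queue~$j$, clearing it within Phase~1; queue~$k$ then grows to $W/2$ by the end of Phase~1 and is drained during Phase~2. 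This solution is feasible for \textbf{NUM} and attains value $W$. The point to check is that it is $W$-constrained: only windows overlapping Phase~1 carry any arrivals, and for each window offset one verifies that the arrivals to queue~$j$ (resp.\ queue~$k$) inside the window are matched by service to queue~$j$ inside Phase~1 (resp.\ to queue~$k$ inside Phase~2) --- a short arithmetic case split on the offset.

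It then remains to lower-bound $\mathcal{R}^\pi_T+c\sum_iQ_i^\pi(T)$ by a dichotomy on the causal backlog $Q_j^\pi(W/2)$. If $Q_j^\pi(W/2)\ge W/8$: queue~$j$ sees no service and no arrivals after slot $W/2$, so $\sum_iQ_i^\pi(T)\ge W/8$; moreover $\pi$ can admit at most the $W$ packets that exist, so its utility is at most the oracle's and $\mathcal{R}^\pi_T\ge 0$, whence $\mathcal{R}^\pi_T+c\sum_iQ_i^\pi(T)\ge cW/8$. Otherwise $Q_j^\pi(W/2)<W/8$, so both backlogs are below $W/8$; since total departures in Phase~1 are at most $W/2$ and Phase~2 has no arrivals, the total traffic admitted by $\pi$ --- hence its utility --- is strictly less than $W/4+W/2=3W/4$, so $\mathcal{R}^\pi_T>W/4$. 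In either case $\mathcal{R}^\pi_T+c\sum_iQ_i^\pi(T)\ge c'W$ with $c'=\min\{c/8,\,1/4\}$.

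The main obstacle is exactly the $W$-constrainedness of the oracle's solution. A direct rescaling of the Theorem~\ref{thm:impossible} construction --- which accumulates backlog over $\Theta(T)$ slots and only then cuts off a queue --- fails here, because over a window lying entirely inside the accumulation phase the oracle's arrivals to the preloaded queue dwarf its service. Confining the whole gadget to $W$ slots, and scheduling the oracle's service so that arrivals and service balance within \emph{every} $W$-window, is what rescues the argument; the subsequent regret/queue dichotomy is routine bookkeeping. (Randomized $\pi$ can be handled by letting $j$ be the user with the larger \emph{expected} backlog and taking expectations throughout, at the cost of the constants.)
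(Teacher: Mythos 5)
Your proposal is correct and is essentially the paper's own argument: the appendix proof likewise compresses the Theorem~\ref{thm:impossible} trap into the first $W/2$ slots (symmetric arrivals and service so any causal policy clears at most half the traffic), has the adversary then zero out one queue based on the policy's Phase-1 behavior, and lets the oracle satisfy the window constraint \eqref{eq:AQT} by devoting all Phase-1 service to the soon-to-be-jammed queue and draining the other afterwards. The only differences are cosmetic --- the paper jams the link with the smaller amount of cleared traffic and bounds $\mathcal{R}^\pi_T+c\sum_i Q_i^\pi(T)$ in one inequality chain for general concave utilities with subderivatives bounded below by $c$, whereas you jam the larger-backlog queue, specialize to throughput utility, and finish with a two-case dichotomy --- yielding the same $\Theta(W)$ lower bound.
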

\begin{proof}
We prove this theorem by constructing a sequence of network events such that lower bound is attained. The construction is similar to the one used in the proof of Theorem \ref{thm:impossible}. The difference is that the constructed sequence of network events is $W$-constrained here. See Appendix \ref{ap:lower-AQT} for the detailed proof.
\end{proof}

\vspace{1mm}

Note that if the window size $W$ is comparable with the time horizon $T$, i.e., $W=\Theta(T)$, the above theorem shows that no causal policy can simultaneously achieve sublinear utility regret and sublinear queue length under the $W$-constrained adversary model. On the other hand, if $W=o(T)$, there \emph{might} exist some causal policy that attains sublinear utility and subinear queue length simultaneously, which we investigate in the next section. In particular, we show that the above lower bound can be asymptotically attained by some causal policy.
\subsection{Algorithm Performance in $W$-Constrained Networks}\label{sec:alg-W}
In this section, we analyze the tradeoffs between utility regret and queue length achieved by two network control algorithms under the $W$-constrained adversary model. The first is the famous Drift-plus-Penalty algorithm \cite{neely2008fairness} that was proved to achieve good utility-delay tradeoffs in stochastic networks. The second is a generalized version of the Tracking Algorithm  \cite{andrews-zhang-1, andrews-zhang-2} that was originally proposed for Adversarial Queueing Theory. In particular, we show that the Tracking Algorithm attains the tradeoff lower bound in Theorem \ref{thm:lower-AQT}.

\vspace{2mm}

\subsubsection{Drift-plus-Penalty Algorithm}
In each slot $t$, the Drift-plus-Penalty algorithm observes the current network event $\omega_t$ and the queue length vector $\mathbf{Q}(t)$, and choose the following control action $\alpha_t^{DP}$:
\begin{equation}\label{eq:dpp}
\small
\begin{split}
\alpha_t^{DP} = \arg\max_{\alpha_t\in\mathcal{D}_{\omega_t}} &\sum_{i} Q_i(t)\Big(b_i(\alpha_t, \omega_t)-a_i(\alpha_t,\omega_t)\Big)\\
&+VU(\alpha_t,\omega_t),
\end{split}
\end{equation}
where $V>0$ is a parameter controlling the tradeoffs between utility regret and queue length. Note that $\sum_{i} Q_i(t)\Big(b_i(\alpha_t, \omega_t)-a_i(\alpha_t,\omega_t)\Big)$ corresponds to the drift part while $VU(\alpha_t,\omega_t)$ is the penalty part.

The control actions in the Drift-plus-Penalty algorithm can be usually decomposed into several actions. For example, in one-hop networks without routing, $\mathbf{a}(t)$ corresponds to the admitted exogenous arrival vector in slot $t$. Suppose that the utility function is in the form $U(\alpha_t,\omega_t)=\sum_i U_i(a_i(t))$. Then the Drift-plus-Penalty algorithm can be decomposed into the solutions of two sub-problems.
\begin{itemize}
\item{(\emph{Admission Control}) Choose 
\[
\mathbf{a}(t)=\arg\max_{\mathbf{a}}\sum_i \Big(VU_i(a_i)-Q_i(t)a_i\Big).
\]}
\item{(\emph{Resource Allocation and Scheduling}) Choose 
\[
\mathbf{b}(t)=\arg\max_{\mathbf{b}}\sum_i Q_i(t)b_i.
\]}
\end{itemize}
The first part is usually a convex optimization problem while the second part corresponds to the MaxWeight policy \cite{tassiulas}.

The following theorem gives the performance of the Drift-plus-Penalty algorithm in $W$-constrained networks.
\begin{theorem}\label{thm:max-AQT}
In any $W$-constrained network, the Drift-plus-Penalty algorithm with parameter $V$ achieves $O\Big(\frac{TW}{V}\Big)$ utility regret and the total queue length is $O\Big(\sqrt{T(W+V)}\Big)$.
\end{theorem}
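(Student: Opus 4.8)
The plan is a Lyapunov drift-plus-penalty analysis in which the usual slot-by-slot comparison against the oracle is replaced by a comparison over windows of $W$ slots, so that the defining property of a $W$-constrained network (inequality \eqref{eq:AQT}) can be exploited. Throughout, $N$, $A$, $B$, $U_{\min}$, $U_{\max}$ are treated as constants, and the oracle's solution $\{\alpha_t^{*}\}$ (chosen, if non-unique, to be the $W$-constrained one) produces arrival/service vectors $\mathbf{a}^{*}(t),\mathbf{b}^{*}(t)$.

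\emph{Step 1 (one-slot drift).} Take the quadratic Lyapunov function $L(t)=\tfrac{1}{2}\sum_i Q_i(t)^2$ and let $\Delta^{DP}(t)=L(t+1)-L(t)$ along the Drift-plus-Penalty trajectory. Since $([x]^+)^2\le x^2$ and $0\le a_i,b_i\le B$, one obtains the standard bound $\Delta^{DP}(t)\le C_1+\sum_i Q_i^{DP}(t)\big(a_i^{DP}(t)-b_i^{DP}(t)\big)$ with $C_1=\tfrac{1}{2}NB^2$. Subtracting $V\,U(\alpha_t^{DP},\omega_t)$ and using that \eqref{eq:dpp} makes $\alpha_t^{DP}$ a minimizer over $\mathcal{D}_{\omega_t}$ of $\sum_i Q_i^{DP}(t)\big(a_i(\alpha_t,\omega_t)-b_i(\alpha_t,\omega_t)\big)-V\,U(\alpha_t,\omega_t)$, I can substitute the (feasible) oracle action $\alpha_t^{*}\in\mathcal{D}_{\omega_t}$ to get, for every $t$,
\[
\Delta^{DP}(t)-V\,U(\alpha_t^{DP},\omega_t)\ \le\ C_1+\sum_i Q_i^{DP}(t)\big(a_i^{*}(t)-b_i^{*}(t)\big)-V\,U(\alpha_t^{*},\omega_t).
\]

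\emph{Step 2 (the window trick --- the crux).} The term $\sum_i Q_i^{DP}(t)(a_i^{*}(t)-b_i^{*}(t))$ need not be nonpositive in a single slot, because the oracle may temporarily accumulate backlog. Summing it over a window $\{t_0,\dots,t_0+W-1\}$ and writing $Q_i^{DP}(\tau)=Q_i^{DP}(t_0)+\big(Q_i^{DP}(\tau)-Q_i^{DP}(t_0)\big)$, the leading piece is $\sum_i Q_i^{DP}(t_0)\sum_{\tau=t_0}^{t_0+W-1}\big(a_i^{*}(\tau)-b_i^{*}(\tau)\big)\le 0$ by the $W$-constrained property \eqref{eq:AQT}, while the remainder is $O(W^2)$ because each queue moves by at most $B$ per slot and $|a_i^{*}-b_i^{*}|\le B$. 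Summing the inequality of Step 1 over the window and telescoping the drift therefore yields, for some constant $C_2$,
\[
L(t_0+W)-L(t_0)-V\!\!\sum_{\tau=t_0}^{t_0+W-1}\!\!U(\alpha_\tau^{DP},\omega_\tau)\ \le\ C_2 W^2\ -\ V\!\!\sum_{\tau=t_0}^{t_0+W-1}\!\!U(\alpha_\tau^{*},\omega_\tau).
\]

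\emph{Step 3 (assembling the two bounds).} Partition $\{0,\dots,T-1\}$ into $\lceil T/W\rceil$ consecutive windows (the terminal one, of length $<W$, is handled by the same estimate and contributes only lower-order terms). For the utility regret, sum the inequality of Step 2 over all windows: the $L$-terms telescope to $L(0)-L(T)\le 0$, leaving $V\,\mathcal{R}_T^{DP}\le \lceil T/W\rceil\,C_2 W^2=O(TW)$, i.e. $\mathcal{R}_T^{DP}=O(TW/V)$. For the queue length, bound the utility difference in Step 2 crudely by $W(U_{\max}-U_{\min})$, giving $L(t_0+W)-L(t_0)\le C_2W^2+VW(U_{\max}-U_{\min})=O(W(W+V))$; chaining this over the at most $\lceil T/W\rceil$ windows up to any slot $t$ yields $L(t)=O(T(W+V))$, and then $\sum_i Q_i^{DP}(T)\le\sqrt{2N\,L(T)}=O(\sqrt{T(W+V)})$ by Cauchy--Schwarz. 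The main obstacle is Step 2: one must verify that pulling $Q_i^{DP}(t_0)$ out of the window sum costs only $O(W^2)$, since it is precisely the slowly-varying nature of $\mathbf{Q}^{DP}$ over $W$ slots that keeps the $W$-dependence of both final bounds tight.
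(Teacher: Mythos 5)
Your proof is correct and follows essentially the same route as the paper's: a $W$-slot drift-plus-penalty argument in which the per-slot Drift-plus-Penalty optimality is compared against the oracle action, the $O(BW)$ variation of the queues within a frame is used to recenter at the frame start so that the window constraint \eqref{eq:AQT} eliminates the leading term at an $O(W^2)$ cost, and the resulting frame inequality is telescoped over the $T/W$ frames to yield both the $O(TW/V)$ regret and the $O(\sqrt{T(W+V)})$ queue bound. The only differences are cosmetic: you sum one-slot drift bounds within each frame rather than bounding the $W$-slot drift directly, and you remark on the partial final frame where the paper simply assumes $W$ divides $T$.
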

\begin{proof}
The proof is based on the Lyapunov drift analysis. However, instead of considering the one-slot drift as in the traditional stochastic analysis, we find upper bounds on the $W$-slot drift-plus-penalty term and make sample-path arguments. See Appendix \ref{ap:max-AQT} for details.
\end{proof}

\vspace{1mm}

There are several important observations about Theorem \ref{thm:max-AQT}. First, if parameter $V$ is set appropriately, then sublinear utility regret and sublinear queue length can be simultaneously achieved by the Drift-plus-Penalty algorithm in $W$-constrained networks as long as $W=o(T)$.  For example, if $W=\Theta(T^{1\slash 2})$, then setting $V=\Theta(T^{3\slash 4})$ yields the utility regret of $O(T^{3\slash 4})$ and the total queue length of $O(T^{7\slash 8})$.

Noticing that sublinear utility regret and sublinear queue length cannot be achieved simultaneously by any causal policy if $W=\Omega(T)$ (Theorem \ref{thm:lower-AQT}), we have the following corollary.
\begin{corollary}
Under the $W$-constrained adversary model, sublinear utility regret and sublinear queue length are simultaneously achievable if and only if $W=o(T)$.
\end{corollary}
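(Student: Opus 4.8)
The plan is to prove the biconditional by establishing the two implications separately; each one follows almost immediately from a theorem already proved in this section, so the work is largely bookkeeping.

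For the ``only if'' direction I would argue the contrapositive: if $W$ is not $o(T)$ then no causal policy can simultaneously achieve $\mathcal{R}^\pi_T=o(T)$ and $\sum_i Q^\pi_i(T)=o(T)$ against a worst-case $W$-constrained adversary. Suppose toward a contradiction that some causal $\pi$ did achieve both uniformly over all sequences in $\mathcal{W}_T$. Since $W\ne o(T)$, there is a constant $\epsilon>0$ and an infinite set of horizons $T$ along which $W\ge\epsilon T$. For each such $T$, Theorem \ref{thm:lower-AQT} supplies a sequence $\{\omega_0,\dots,\omega_{T-1}\}\in\mathcal{W}_T$ with $\mathcal{R}^\pi_T+c\sum_i Q^\pi_i(T)\ge c'W\ge c'\epsilon T$. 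But on that very sequence both $\mathcal{R}^\pi_T$ and $\sum_i Q^\pi_i(T)$ are $o(T)$ by assumption, so the fixed linear combination $\mathcal{R}^\pi_T+c\sum_i Q^\pi_i(T)$ is $o(T)$, contradicting the lower bound $c'\epsilon T$ along an infinite set of horizons. Hence $W=o(T)$.

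For the ``if'' direction, assume $W=o(T)$ and exhibit a causal policy that works, namely the Drift-plus-Penalty algorithm of Theorem \ref{thm:max-AQT} with the tuning parameter $V$ chosen by the designer as a function of the (known) $T$ and $W$. By that theorem the utility regret is $O(TW/V)$ and the total queue length is $O(\sqrt{T(W+V)})$. I would take $V=\Theta(\sqrt{WT})$. Then $TW/V=\Theta(\sqrt{WT})$, and $W=o(T)$ gives $\sqrt{WT}=o(T)$, so the regret is $o(T)$. Moreover $W\le\sqrt{WT}$ since $W\le T$, hence $W+V=\Theta(\sqrt{WT})$ and the queue bound becomes $O\big(\sqrt{T\cdot\sqrt{WT}}\big)=O(T^{3/4}W^{1/4})$, which is again $o(T)$ because $(W/T)^{1/4}\to 0$. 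Thus both quantities are simultaneously sublinear, proving achievability.

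The only points needing a little care are (i) phrasing the ``not $o(T)$'' hypothesis via an infinite subsequence on which $W\ge\epsilon T$, so that the clash with Theorem \ref{thm:lower-AQT} is clean even when $W/T$ oscillates, and (ii) verifying that one choice of $V$ simultaneously pushes both the $O(TW/V)$ term and the $O(\sqrt{T(W+V)})$ term below linear order; the balancing choice $V=\Theta(\sqrt{WT})$ is the natural one, as it equalizes the $W$-dependence of the two bounds. I do not anticipate any substantive obstacle, since all the analytic content is already carried by Theorems \ref{thm:lower-AQT} and \ref{thm:max-AQT}.
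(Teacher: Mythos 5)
Your proposal is correct and follows essentially the same route as the paper: the ``only if'' direction is read off from the lower bound in Theorem \ref{thm:lower-AQT}, and the ``if'' direction from the Drift-plus-Penalty bounds in Theorem \ref{thm:max-AQT} with a suitable choice of $V$. The only (welcome) refinements are that you give a general tuning $V=\Theta(\sqrt{WT})$ valid for every $W=o(T)$ where the paper only exhibits a sample choice, and you handle the negation of $W=o(T)$ via a subsequence with $W\ge\epsilon T$ rather than assuming $W=\Omega(T)$ outright.
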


Second, the performance of the Drift-plus-Penalty algorithm could be much worst than the lower bound in Theorem \ref{thm:lower-AQT}. For example, if $W=\Theta(T^{1\slash 2})$, then one of the tradeoffs implied by the lower bound is that the utility regret is $\Theta(T^{1\slash 2})$ and the total queue length is also $\Theta(T^{1\slash 2})$, which is not achievable by the Drift-plus-Penalty algorithm. In the next section, we develop an algorithm that has a better performance and attains the lower bound.

\vspace{2mm}

\subsubsection{Tracking Algorithm}\label{sec:alg-tracking-W}
The tradeoff bounds achieved by the Drift-plus-Penalty algorithm is relatively loose as compared to the lower bound in Theorem \ref{thm:lower-AQT}. In this section, we develop the Tracking Algorithm that has a better performance and attains the lower bound in Theorem \ref{thm:lower-AQT}.

The original Tracking Algorithm was proposed in \cite{andrews-zhang-1, andrews-zhang-2} to solve a scheduling problem under the Adversarial Queueing Theory model. However, it only works for a very specific network model: (i) the network has to be single-hop where the arrival vector  is independent of the control action, and (ii) the control action has to satisfy the primary interference constraints, i.e., only one link incident on the same node can be activated in each slot. Next, we extend the original Tracking Algorithm to accommodate the general network model considered in this paper.

Let $\Omega$ be the set of all possible network events that could happen in each slot. In order for the Tracking Algorithm to work, the cardinality of $\Omega$ has to be finite (otherwise it could be discretized into a finite set as in \cite{andrews-zhang-1}). For example, in a single-hop network, suppose each network event $\omega_t$ corresponds to a couple $(\mathbf{A}(t),\mathbf{S}(t))$ where $\mathbf{A}(t)$ is a vector of exogenous packet arrivals in slot $t$ and $\mathbf{S}(t)$ a vector of link states in slot $t$. For any link $i$ and time $t$, assume that $0\le A_i(t)\le B$ and $A_i(t)$ is an integer, and  each link only has a finite number of $S$ states. Then $|\Omega|=(S B)^N$. 

The Tracking Algorithm is given in Algorithm \ref{alg:tracking}. It maintains an action queue $\mathcal{Q}_{\omega}$ for each type of network events $\omega\in\Omega$. The action queue $\mathcal{Q}_{\omega}$ stores the optimal actions that the Tracking Algorithm should have taken when network event $\omega$ occurred. Note that the sequence of optimal control actions cannot be calculated
online but can be calculated  every $W$ slots due to the window structure \eqref{eq:AQT}. In the Tracking Algorihtm, the sequence of optimal actions during each window are added to the action queues in batch at the end of this window (steps \ref{step:update}-\ref{step:tracking1}). Here, the  optimal actions during a window $[t-W+1, t]$ corresponds to any optimal solution to \eqref{eq:w-opt} (which is also a part of the optimal solution to \textbf{NUM}). In each slot $t$, the Tracking Algorithm first observes the current network event $\omega_t=\omega$. If the corresponding action queue $\mathcal{Q}_{\omega}$ is not empty (i.e., there are some actions we should have taken but have not taken yet), the algorithm just sets the control action as the first action in the action queue $\mathcal{Q}_{\omega}$, and the action is removed from the action queue $\mathcal{Q}_{\omega}$ (steps \ref{step:observe}-\ref{step:greedy}). If the action queue is empty, the algorithm may take any feasible action. In our analysis, we assume that no action is taken when the action queue is empty.

\vspace{-5mm}

\begin{equation}\label{eq:w-opt}
\small
\begin{split}
\max \quad &\sum_{\tau=t-W+1}^{t} U(\alpha_\tau, \omega_\tau)\\
\text{s.t.}\quad & \sum_{\tau=t-W+1}^{t}a_i(\alpha_\tau, \omega_\tau)\le \sum_{\tau=t-W+1}^{t}b_i(\alpha_\tau, \omega_\tau),~\forall i\\
& \alpha_\tau\in\mathcal{D}_{\omega_\tau},~\forall \tau.
\end{split}
\end{equation}

\begin{algorithm}
\caption{Tracking Algorithm (TA)}\label{alg:tracking}
\begin{algorithmic}[1]
\STATE Initialize $\mathcal{Q}_{\omega} =\emptyset$ for each $\omega\in \Omega$.
\FOR{$t=0,\cdots,T-1$}
\STATE Observe the current network event $\omega_t=\omega$.\label{step:observe}
\IF {action queue $\mathcal{Q}_{\omega}$ is not empty}
\STATE Choose the control action $\alpha^{TA}_t$ as the first action in $\mathcal{Q}_{\omega}$ and remove this action from $\mathcal{Q}_{\omega}$. \label{step:greedy}
\ENDIF
\IF{$\mod(t,W)=W-1$}
\STATE Compute the sequence of optimal control actions $\{\alpha^*_{\tau}\}_{\tau=t-W+1}^{t}$ in the past window $[t-W+1,~t]$, which is any optimal solution to \eqref{eq:w-opt}.
\label{step:update}
\STATE For each slot $\tau$ in the past window $[t-W+1,~t]$, enqueue the computed optimal action $\alpha^*_{\tau}$ into the action queue $\mathcal{Q}_{\omega_{\tau}}$, where $\omega_{\tau}$ is the network event occurring in slot $\tau$.\label{step:tracking1}
\ENDIF
\ENDFOR
\end{algorithmic}
\end{algorithm}

The following theorem gives the tradeoff between utility regret and queue length achieved by the Tracking Algorithm under the $W$-constrained adversary model.
\begin{theorem}\label{thm:tracking-AQT}
In any $W$-constrained network, the Tracking Algorithm achieves $O(W)$ utility regret and the total queue length is $O(W)$.
\end{theorem}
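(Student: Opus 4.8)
The plan is to couple the Tracking Algorithm with the sequence of \emph{tracked actions} --- the per-window optimizers of~\eqref{eq:w-opt} that it enqueues --- and to exploit two properties of a $W$-constrained network. (Throughout, $|\Omega|$ and $N$ are treated as constants, and we take $\mathbf{Q}(0)=\mathbf{0}$; a bounded nonzero initial backlog only adds $O(1)$.) First: on each complete window $[jW,(j+1)W-1]$ the restriction of the oracle's solution $\{\alpha^*_\tau\}$ is feasible for~\eqref{eq:w-opt}, since the balance condition~\eqref{eq:AQT} with $t=jW$ gives $\sum_\tau a^*_i(\tau)\le\sum_\tau b^*_i(\tau)$ over that window and $\alpha^*_\tau\in\mathcal{D}_{\omega_\tau}$; hence the optimal value of~\eqref{eq:w-opt} on that window --- which is exactly the utility carried by the batch of tracked actions produced there --- is at least the oracle's utility on it, and summing over the $\lfloor T/W\rfloor$ complete windows the tracked actions carry total utility at least $(\text{oracle's total utility})-O(W)$, the defect coming only from the last incomplete window. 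Second: every tracked batch satisfies $\sum_\tau a_i(\tau)\le\sum_\tau b_i(\tau)$ for each queue $i$. Because the Tracking Algorithm only ever executes a tracked action in a slot carrying the \emph{same} network event, it realizes exactly the arrivals, offered service, and utility that the action was designed for --- it simply runs them late --- so the whole proof reduces to bounding how many tracked actions are ``in flight'' in the action queues.

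The structural lemma I would prove is that $|\mathcal{Q}_\omega(t)|\le W$ for all $\omega$ and all $t$, so that the total backlog of pending tracked actions, $\sum_\omega|\mathcal{Q}_\omega(t)|$, never exceeds $|\Omega|W=O(W)$. This is a one-line recursion: if $q_k$ denotes $|\mathcal{Q}_\omega|$ right after window $k$ has been processed and $d_k$ is the number of slots of window $k$ that carry event $\omega$ (so $d_k\le W$), then the $d_k$ occurrences of $\omega$ inside window $k$ dequeue from the $q_{k-1}$ actions present at the window's start while the fresh batch of $d_k$ actions is appended only at the window's end, giving $q_k=[q_{k-1}-d_k]^{+}+d_k=\max(q_{k-1},d_k)$; with $q_{-1}=0$ this yields $q_k=\max_{j\le k}d_j\le W$, and inside any window the queue never exceeds its value at the window's start.

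For the regret bound, note that (the queues starting empty) the multiset of tracked actions executed over $[0,T-1]$ equals the multiset of those generated over $[0,T-1]$ with the ones still pending at time $T$ removed. By the first property above, the generated actions carry utility at least $(\text{oracle})-O(W)$; discarding the $\le|\Omega|W$ pending ones costs at most $|\Omega|W\,U_{\max}=O(W)$; and the number of slots in which the algorithm idles equals $T$ minus the number of tracked actions ever executed, namely $\big(T-\lfloor T/W\rfloor W\big)+\sum_\omega|\mathcal{Q}_\omega(T)|=O(W)$, each such slot costing at most $U_{\max}-U_{\min}=O(1)$ relative to the oracle. Summing these $O(W)$ contributions gives $\mathcal{R}^{TA}_T=O(W)$.

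For the queue bound I would use the exact identity $Q^{TA}_i(T)=\max_{0\le s\le T}\sum_{\tau=s}^{T-1}\big(a^{TA}_i(\tau)-b^{TA}_i(\tau)\big)$ and, for each fixed $s$, decompose the tracked actions executed during $[s,T-1]$ as $\mathrm{Pending}(s)\sqcup\mathrm{Gen}([s,T-1])\setminus\mathrm{Pending}(T)$, where $\mathrm{Gen}([s,T-1])$ is exactly the union of the complete windows whose end-slot lies in $[s,T-1]$. For each queue $i$ the $\mathrm{Gen}$ part contributes $\sum_\tau(a_i-b_i)\le 0$ window by window; each of the two $\mathrm{Pending}$ sets has at most $|\Omega|W$ elements, hence contributes at most $|\Omega|WB=O(W)$ in absolute value; and idle slots contribute $0$. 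Therefore $\sum_{\tau=s}^{T-1}(a^{TA}_i(\tau)-b^{TA}_i(\tau))=O(W)$ uniformly in $s$, so $Q^{TA}_i(T)=O(W)$ and $\sum_i Q^{TA}_i(T)=O(W)$. I expect this last step to be the crux: since an old tracked action can linger in its action queue and be executed arbitrarily late, a naive window-by-window accounting of net arrivals over $[s,T-1]$ fails, and the decomposition above is what rescues it --- it isolates the only non-window-aligned pieces (the two pending sets, each $O(W)$) and leaves an exact union of balanced windows. Combined with the lower bound of Theorem~\ref{thm:lower-AQT}, this shows the Tracking Algorithm attains the optimal tradeoff up to constants.
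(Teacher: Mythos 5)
Your proposal is correct and takes a genuinely different route from the paper's proof. The core combinatorial fact you isolate---that each action queue obeys $q_k=\max(q_{k-1},d_k)\le W$---is the same engine that drives the paper's Lemma~\ref{lm:tracking-AQT}, but the paper phrases it in terms of two scalar ``debt'' processes $q_{i,\omega}(t)$ (cumulative net-service gap) and $u_\omega(t)$ (cumulative utility gap) rather than the cardinality $|\mathcal{Q}_\omega(t)|$; your formulation is the cleaner invariant, and the paper's debt bounds $q_{i,\omega}\le WB$, $u_\omega\le WU_{\max}$ fall out of it immediately. Where the two arguments really diverge is the queue-length bound. The paper fixes, for each $i$, the last zero-crossing $\tau_i$ of $Q_i$, rounds up to the next frame boundary $t_i$ to get $Q_i(t_i)\le WB$, and then bounds $Q_i(T)-Q_i(t_i)$ by splitting the interval by event type and invoking the debt-queue bound together with the window constraint~\eqref{eq:AQT}. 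You instead start from the Lindley identity $Q_i(T)=\max_s\sum_{\tau=s}^{T-1}(a_i-b_i)$ and, for each $s$, decompose the executed actions as $(\mathrm{Pending}(s)\sqcup\mathrm{Gen}([s,T-1]))\setminus\mathrm{Pending}(T)$; the $\mathrm{Gen}$ part is an exact disjoint union of complete windows whose internal balance constraint in~\eqref{eq:w-opt} makes it contribute $\le 0$, and the two pending sets contribute $O(W)$ each. This avoids having to single out a distinguished time and gives the $O(W)$ bound uniformly over $s$, and it also makes the ``$\le 0$ per window'' step explicit through the constraint of the tracked subproblem~\eqref{eq:w-opt} rather than implicitly through~\eqref{eq:AQT} (the paper's invocation of~\eqref{eq:AQT} is a bit loose there, since what the algorithm actually replays is a window-optimal solution to~\eqref{eq:w-opt}, not literally the oracle). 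Your regret accounting---oracle $\le$ generated $+O(W)$, generated $=$ executed $+$ pending, and $O(W)$ idle slots each costing $U_{\max}-U_{\min}$---is equivalent to the paper's telescoping of $u_\omega(T)-u_\omega(0)$ but is more explicit about where the idle-slot cost enters. One small wording nit: ``inside any window the queue never exceeds its value at the window's start'' is not literally true at the window's final slot where the new batch is appended (the queue can jump up to $q_k>q_{k-1}$), but since $q_k\le W$ for all $k$ the bound you actually need holds throughout, so the conclusion is unaffected.
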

\begin{proof}
Since the Tracking Algorithm updates the optimal actions every $W$ slots and replays these actions whenever possible, the number of unfulfilled actions in any action queue is at most $W$. Thus, the performance gap between the Tracking Algorithm and optimal policy is also $O(W)$. See Appendix \ref{ap:tracking-AQT} for details.
\end{proof}

\vspace{1mm}

There are several important observations about Theorem \ref{thm:tracking-AQT}.  First, under the $W$-constrained adversary model, sublinear utility regret and sublinear queue length can be simultaneously achieved by the Tracking Algorithm as long as $W=o(T)$. Moreover, the tradeoff achieved by the Tracking Algorithm is better than that of the Drift-plus-Penlaty algorithm, in terms of their dependence on $W$ and $T$. For example, if $W=\Theta(T^{1\slash 2})$, the Tracking Algorithm can achieve $O(T^{1\slash 2})$ utility regret and $O(T^{1\slash 2})$ total queue length, while such a tradeoff is not attainable by the Drift-plus-Penalty algorithm.

Second, the Tracking Algorithm asymptotically achieves the lower bound in Theorem \ref{thm:lower-AQT} in the sense that it ensures that $\mathcal{R}_T\Big(\{\omega_0,\cdots,\omega_{T-1}\}\Big)+\sum_i Q_i(T)=O(W)$ for any $\{\omega_0,\cdots,\omega_{T-1}\}\in\mathcal{W}_T$. As a result, the Tracking Algorithm asymptotically achieves the optimal tradeoff between utility regret and queue length w.r.t. $W$ and $T$.

Third, the Tracking Algorithm needs to maintain a virtual queue for each type of network events while the size of the network event space $\Omega$ may be exponential in the number of users $N$. As a result, the Tracking Algorithm may not be a practical algorithm. The purpose of presenting the Tracking Algorithm is to demonstrate that the lower bound in Theorem \ref{thm:lower-AQT} could be asymptotically achieved by a causal policy. Note that Andrews and Zhang \cite{andrews-zhang-1} proposed a method to get rid of the exponential dependence on $N$, at the expense of much more involved algorithm.

Finally, the Tracking Algorithm described in Algorithm \ref{alg:tracking} only achieves one point in the tradeoff curve since it only tracks the optimal solution to \textbf{NUM}. One approach to enable tunable tradeoffs is to relax the optimization problem \eqref{eq:w-opt}. For example, the first constraint in \eqref{eq:w-opt} can be modified to
\[
\sum_{\tau=t-W+1}^{t}a_i(\alpha_\tau, \omega_\tau)\le \sum_{\tau=t-W+1}^{t}b_i(\alpha_\tau, \omega_\tau) + V,
\]
for some parameter $V$. Clearly, by tuning the value of $V$, the optimal solution  to \eqref{eq:w-opt} (denoted by $\{\alpha^V_t\}_{t=0}^{T-1}$) can achieve different tradeoffs. By tracking the solution $\{\alpha^V_t\}_{t=0}^{T-1}$, the Tracking Algorithm can achieve tunable tradeoffs. The analysis of the tunable Tracking Algorithm is similar to the proof of Theorem \ref{thm:tracking-AQT} but requires more specific assumptions on the utility function, and is omitted due to space constraints.

Note that the above Tracking Algorithm requires $W$ as a parameter. We discuss how to properly select the value of $W$ in Section \ref{sec:sensitivity}.

\section{$V_T$-Constrained Adversary Model}\label{sec:VT}
The aforementioned $W$-constrained model is  relatively restrictive, where the stringent constraints \eqref{eq:AQT} have to be satisfied for every window of $W$ slots. In this section, we consider a general adversary model where the window constraints \eqref{eq:AQT} are relaxed.

The new adversary model is parameterized by the inherent variation in the sequence of network events, which is measured as follows. Given a sequence of network events $\omega_0,\cdots,\omega_{T-1}$ and a (possibly non-causal) policy, we define
\[
V^\pi\Big(\{\omega_0,\cdots,\omega_{T-1}\}\Big)=\max_{t\le T}\sum_i Q_i^\pi(t).
\]
The above function measures the peak queue length achieved by policy $\pi$ during its sample path. We further define
$
V^*\Big(\{\omega_0,\cdots,\omega_{T-1}\}\Big)
$
to be the peak queue length during the sample path of the optimal solution to \textbf{NUM} under the sequence of network events $\omega_0,\cdots,\omega_{T-1}$. If there are multiple optimal solutions to \textbf{NUM}, then the one with the smallest value of $V^*(\cdot)$ is considered. Note that $V^*(\cdot)$ only depends on $\{\omega_0,\cdots,\omega_{T-1}\}$ and measures the inherent variations in the sequence of network events.

Now we define the notion of $V_T$-constrained network dynamics where the value of $V^*(\cdot)$ is constrained by some budget $V_T$.
\begin{definition}[$V_T$-Constrained Dynamics]
Given some $V_T\in [0,NTB]$, a sequence of network events $\omega_0,\cdots,\omega_{T-1}$ is $V_T$-constrained if
\[
V^*\Big(\{\omega_0,\cdots,\omega_{T-1}\}\Big)\le V_T.
\]
\end{definition}
\noindent Any network satisfying the above is called a  \textbf{$V_T$-constrained network}. Denote by $\mathcal{V}_T$ the set of all possible  sequences of network events that are  $V_T$-constrained. A \textbf{$V_T$-constrained adversary} can only select the sequence of network events from the set $\mathcal{V}_T$. 

Note that we restrict the range of $V_T$ to $[0,NTB]$ since the peak queue length within $T$ slots is at most $NTB$. Any larger value of $V_T$  has the same effect as $V_T=NTB$. Note also that the larger $V_T$ is, the more variations the network could have. By varying the value of $V_T$ from 0 to  $NTB$, the above $V_T$-constrained adversary model covers a wide range of adversarial settings: from a strictly constrained adversary ($V_T=0$, i.e., the arrivals should not exceed the services for each queue in every slot) to a completely unconstrained adversary ($V_T=NTB$).

In the following, we first provide a lower bound on the tradeoffs between utility regret and queue length under the $V_T$-constrained adversary model in Section \ref{sec:lower-vt} and then analyze the performance of the Drift-plus-Penalty policy and the Tracking Algorithm in Section \ref{sec:alg-vt}.
\subsection{Lower Bound on the Tradeoffs}\label{sec:lower-vt}
The following theorem provides a lower bound on the tradeoffs between utility regret and queue length under the $V_T$-constrained adversary model.
\begin{theorem}\label{thm:lower-general}
For any causal policy $\pi$, there exists a sequence of network events $\{\omega_0,\cdots,\omega_{T-1}\}\in \mathcal{V}_T$ such that 
\[
\mathcal{R}^\pi_T\Big(\{\omega_0,\cdots,\omega_{T-1}\}\Big)+c\sum_i Q_i^\pi(T)\ge c'V_T,
\]
where $c'>0$ is some constant.
\end{theorem}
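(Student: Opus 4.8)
The plan is to mirror the adversarial constructions behind Theorem~\ref{thm:impossible} and Theorem~\ref{thm:lower-AQT}, but to calibrate the length of the ``learning phase'' to $V_T$ rather than to $T$ or $W$, and then to certify that the resulting event sequence really does lie in $\mathcal{V}_T$. Concretely, I would take a single-hop network with two users and the sum-throughput utility $U(\alpha_t,\omega_t)=x_1(t)+x_2(t)$, and split the horizon into a first phase $\{0,\dots,L-1\}$ and a second phase $\{L,\dots,T-1\}$, where $L=\min\{\lfloor V_T\rfloor,\ \lfloor \tfrac{B}{B+1}T\rfloor\}$, so that $L=\Theta(V_T)$ when $V_T=O(T)$ and $L=\Theta(T)$ otherwise. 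In phase~1 each slot brings one exogenous packet to each queue and a single unit of service that the scheduler may assign to either queue; in phase~2 there are no exogenous arrivals, and --- after simulating $\pi$ and inspecting the queue lengths it has produced by slot $L$ --- the adversary sets the service rate of the \emph{longer} of $\pi$'s two queues to $0$ and that of the other queue to $B$. (If $\pi$ is randomized, the adversary instead commits in advance to whichever of the two continuations ``starve queue $1$'' / ``starve queue $2$'' gives the larger \emph{expected} penalty; averaging the two expected penalties recovers the same bound, since one of them is at least the average.)

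The first substantive step is to pin down the oracle's behaviour and bound its peak queue. Since each queue can only clear the backlog routed to it, and one of the two queues is starved throughout phase~2, an optimal solution to \textbf{NUM} must (i) devote all of the phase-1 service to the soon-to-be-starved queue, draining it within phase~1, and (ii) place all surviving backlog on the favoured queue; this attains the maximum possible utility $2L$, clears all backlog by slot $T$ because $L\le\tfrac{B}{B+1}T$ implies $B(T-L)\ge L$, and yields peak total queue length exactly $L$ (reached at slot $L$). Moreover \emph{any} max-utility solution must leave $L$ units of backlog on the favoured queue at slot $L$, so $V^*(\{\omega_0,\dots,\omega_{T-1}\})=L\le V_T$ and the sequence is $V_T$-constrained. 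This verification --- establishing both the optimal value and the \emph{minimum} peak queue over all optima, since the $V_T$-constraint is defined through the least-peak optimal solution --- is the step I expect to require the most care.

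Next I would lower-bound the cost incurred by $\pi$. Let $m\le 2L$ be the number of exogenous packets $\pi$ admits in phase~1; then $\mathcal{R}^\pi_T\ge 2L-m$ (the oracle admits $2L$, and there is no utility in phase~2), while at most $L$ packets can be served during phase~1, so $\pi$ carries at least $[m-L]^+$ units of backlog at slot $L$, hence the longer of its two queues carries at least $\tfrac12[m-L]^+$, and --- because that queue is starved for the rest of the horizon --- this backlog survives to slot $T$. Therefore $\mathcal{R}^\pi_T+c\sum_i Q_i^\pi(T)\ge (2L-m)+\tfrac{c}{2}[m-L]^+$, and a one-line case split on $m\le L$ versus $m>L$ gives $\mathcal{R}^\pi_T+c\sum_i Q_i^\pi(T)\ge \min\{1,\tfrac{c}{2}\}\,L$ for any $c>0$. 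Using $L\ge\tfrac12 V_T$ in the regime $V_T=O(T)$, and $L=\Theta(T)=\Omega(V_T)$ (since $V_T\le NTB$ with $N,B$ treated as constants) in the regime $V_T=\Omega(T)$, this is $\ge c'V_T$ for a constant $c'>0$.

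Finally, I would remark on a shorter alternative that reuses Theorem~\ref{thm:lower-AQT} directly: a telescoping argument that partitions the interval back to the last empty slot into windows of length $W$ shows that in any $W$-constrained network the optimal queue satisfies $Q_i^*(t)<WB$ for all $t$, so $\mathcal{W}_T\subseteq\mathcal{V}_{NWB}$; applying Theorem~\ref{thm:lower-AQT} with $W=\lfloor V_T/(NB)\rfloor$ (and handling $V_T<2NB$ with $W=1$) then yields the claim with $c'=\Theta(1/(NB))$. I would present the direct construction above as the main proof, since it is self-contained and parallels the earlier lower bounds, and relegate this reduction and all routine calculations to the appendix.
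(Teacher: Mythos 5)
Your proposal is correct and follows essentially the same route as the paper: the paper proves Theorem~\ref{thm:lower-general} by reusing the two-user construction of Theorem~\ref{thm:lower-AQT} verbatim with the first-phase length set by $V_T$ instead of $W$, which is exactly your main construction (your explicit verification that the least-peak optimal solution gives $V^*=L\le V_T$, the cap $L\le\frac{B}{B+1}T$ for large $V_T$, and the averaging step for randomized policies are details the paper leaves implicit). Your closing remark that $\mathcal{W}_T\subseteq\mathcal{V}_{NWB}$ also matches the paper's own observation in Section~\ref{sec:comp}, so it is a valid shortcut but not a different method.
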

\begin{proof}
The proof is the same as that for Theorem \ref{thm:lower-AQT} except that we replace $W$ with $V_T$, thus omitted for brevity.
\end{proof}

\vspace{1mm}

Theorem \ref{thm:lower-general} shows that if $V_T=\Omega(T)$, then no causal policy can simultaneously achieve sublinear utility regret and sublinear queue length under the $V_T$-constrained adversary model. On the other hand, if $V_T=o(T)$, there \emph{might} exist some causal policy that attains sublinear utility regret and sublinear queue length simultaneously, which we investigate in Section \ref{sec:alg-vt}.
\subsection{Algorithm Performance in $V_T$-Constrained Networks}\label{sec:alg-vt}
In this section, we analyze the tradeoffs between utility regret and queue length achieved by two algorithms in $V_T$-constrained networks: the Drift-plus-Penalty algorithm and the Tracking Algorithm. In particular, we show that both algorithms simultaneously achieve sublinear utility regret and sublinear queue length if $V_T=o(T)$.

\vspace{2mm}
\subsubsection{Drift-plus-Penalty Algorithm}
The Drift-plus-Penalty algorithm discussed in Section \ref{sec:alg-W} can be directly applied to the $V_T$-constrained setting. The following theorem gives the tradeoffs between utility regret and  queue length achieved by the Drift-plus-Penalty algorithm under the $V_T$-constrained adversary model.

\begin{theorem}\label{thm:max-general}
In any $V_T$-constrained network, the Drift-plus-Penalty algorithm with parameter $V>0$ achieves $O\Big(\frac{V_T^{2\slash 3}T^{4\slash 3}}{V}+\frac{V_T^{1\slash 3}T^{7\slash 6}}{V^{1\slash 2}}\Big)$ utility regret and the total queue length is $O\Big(V_T^{1\slash 3} T^{2\slash 3}+T^{1\slash 2}V^{1\slash 2}\Big)$.
\end{theorem}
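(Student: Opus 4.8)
The plan is to reuse the Lyapunov drift-plus-penalty analysis behind Theorem~\ref{thm:max-AQT}, but with the stringent per-window constraint \eqref{eq:AQT} replaced by a relaxed one that \emph{every} $V_T$-constrained instance automatically satisfies, and then to optimize over the now-free window length $W$, ultimately fixing $W=\Theta\big(V_T^{2/3}T^{1/3}\big)$. The reduction hinges on a single observation: if $\{\omega_t\}\in\mathcal V_T$ and $\{\mathbf a^*(t),\mathbf b^*(t)\}$ is the optimal NUM solution whose peak backlog obeys $V^*\le V_T$, then the queueing dynamics $Q^*_i(\tau+1)\ge Q^*_i(\tau)+a^*_i(\tau)-b^*_i(\tau)$ give, for every $t_0$ and every $w$ with $t_0+w\le T$,
\[
\sum_{\tau=t_0}^{t_0+w-1}\big(a^*_i(\tau)-b^*_i(\tau)\big)\ \le\ Q^*_i(t_0+w)-Q^*_i(t_0)\ \le\ Q^*_i(t_0+w)\ \le\ V_T,\qquad\forall i.
\]
Thus a $V_T$-constrained instance behaves like a $W$-constrained one in which every window of length $W$ is granted an extra burst of size $V_T$.

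Fix $W$ and set $\Phi(t)=\tfrac12\sum_iQ_i^{DP}(t)^2$. The standard one-slot bound $\Phi(t+1)-\Phi(t)\le\tfrac12NB^2+\sum_iQ_i^{DP}(t)\big(a_i^{DP}(t)-b_i^{DP}(t)\big)$ combined with the maximizing property \eqref{eq:dpp} of the Drift-plus-Penalty action against the feasible comparison action $\alpha^*_t$ gives
\[
\Phi(t+1)-\Phi(t)-VU(\alpha^{DP}_t)\ \le\ \tfrac12NB^2+\sum_iQ_i^{DP}(t)\big(a^*_i(t)-b^*_i(t)\big)-VU(\alpha^*_t).
\]
Summing over a window $[t_0,t_0+W-1]$, I would write $Q_i^{DP}(\tau)=Q_i^{DP}(t_0)+\big(Q_i^{DP}(\tau)-Q_i^{DP}(t_0)\big)$ with $\big|Q_i^{DP}(\tau)-Q_i^{DP}(t_0)\big|\le WB$, and use the relaxed window bound $\sum_\tau(a^*_i-b^*_i)\le V_T$ for the part multiplying the constant $Q_i^{DP}(t_0)$ and $\sum_\tau|a^*_i-b^*_i|\le WB$ for the remainder; this bounds the cross term of that window by $Q_i^{DP}(t_0)V_T+W^2B^2$ per queue. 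Telescoping over the $\lceil T/W\rceil$ windows (the trailing partial window obeys the same estimate, since its net is still $\le V_T$) gives, for every $t^\star\le T$,
\[
\Phi(t^\star)-V\sum_{\tau<t^\star}U(\alpha^{DP}_\tau)\ \le\ O(TW)+O\!\Big(\tfrac{T}{W}V_T\bar Q\Big)-V\sum_{\tau<t^\star}U(\alpha^*_\tau),
\]
where $\bar Q\triangleq\max_{t\le T}\sum_iQ_i^{DP}(t)$ bounds each window-start backlog, and $O(TW)$ absorbs $\lceil T/W\rceil\big(\tfrac12WNB^2+NW^2B^2\big)$ (with $N,A,B=O(1)$ and $W\ge1$).

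Two conclusions follow. For the queue bound, rearrange so $\Phi(t^\star)$ stands alone, bound $V\sum_{\tau<t^\star}\big(U(\alpha^{DP}_\tau)-U(\alpha^*_\tau)\big)\le VT(U_{\max}-U_{\min})$, use $\big(\sum_iQ_i^{DP}(t^\star)\big)^2\le2N\Phi(t^\star)$, and take the supremum over $t^\star$: this yields $\bar Q^2\le O\big(T(W+V)\big)+O\big((T/W)V_T\bar Q\big)$, a quadratic in $\bar Q$ with solution $\bar Q=O\big((T/W)V_T+\sqrt{T(W+V)}\big)$. For the regret, evaluate the displayed inequality at $t^\star=T$, drop $\Phi(T)\ge0$, and rearrange to get $\mathcal R^{DP}_T\le O\big((TW+(T/W)V_T\bar Q)/V\big)$. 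Substituting the $\bar Q$ bound and then taking $W=\Theta\big(V_T^{2/3}T^{1/3}\big)$ collapses every resulting term to either $V_T^{2/3}T^{4/3}/V$ or $V_T^{1/3}T^{7/6}/V^{1/2}$ for the regret, and to either $V_T^{1/3}T^{2/3}$ or $\sqrt{TV}$ for $\sum_iQ_i^{DP}(T)\le\bar Q$ — exactly the claimed bounds. (One should check $W\in[1,T]$ on the relevant range of $V_T$; when $V_T=\Omega(T)$ the claimed bounds are themselves $\Omega(T)$, hence vacuous and consistent with Theorem~\ref{thm:impossible}.)

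The main obstacle is the self-reference: the window cross term unavoidably carries $\sum_iQ_i^{DP}(t_0)$, i.e.\ the very backlog being bounded, so one must telescope up to an \emph{arbitrary} $t^\star$ and supremize to produce a single quantity $\bar Q=\max_t\sum_iQ_i^{DP}(t)$ on both sides, verify the trailing partial window satisfies the same $Q_i(t_0)V_T+W^2B^2$ estimate, and then solve the resulting quadratic inequality. The remaining ingredients — the one-slot drift bound, the $\arg\max$ property of \eqref{eq:dpp}, and the final optimization over $W$ — are routine.
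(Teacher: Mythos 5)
Your proposal is correct and follows essentially the same route as the paper's Appendix proof: reduce the $V_T$-constraint to the relaxed per-window bound $\sum_{\tau=t}^{t+W-1}\big(a^*_i(\tau)-b^*_i(\tau)\big)\le V_T$, run the frame-based drift-plus-penalty analysis to obtain a self-referential quadratic inequality in the peak backlog, solve it, and then optimize the frame length at $W=\Theta\big(V_T^{2/3}T^{1/3}\big)$. Your variations (one-slot drift summed over a frame instead of the paper's $W$-slot drift lemma, explicit justification of the window relaxation from the queue dynamics, and careful treatment of the trailing partial frame via a supremum over $t^\star$) are only presentational refinements of the same argument.
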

\begin{proof}
We first divide the time horizon into frames of $W$ slots. Then we apply the analysis used in the $W$-constrained adversary model and derive bounds on the $W$-slot drift-plus-penalty term, which further leads to upper bounds on utility regret and queue length. The value of $W$ is carefully chosen to optimize these bounds. See Appendix \ref{ap:max-general} for details. 
\end{proof}

\vspace{1mm}

There are several observations about Theorem \ref{thm:max-general}. First, the Drift-plus-Penalty algorithm achieves sublinear utility regret and sublinear queue length under the $V_T$-constrained adversary model whenever $V_T=o(T)$. For example,  if $V_T=\Theta(T^{1\slash 2})$ and we set $V=\Theta(T^{4\slash 5})$, then the utility regret and the total queue length are both $O(T^{11\slash 12})$. Notice that sublinear utility regret and sublinear queue length cannot be simultaneously  achieved by any causal policy if $V_T=\Omega(T)$ (Theorem \ref{thm:lower-general}). We have the following corollary.
\begin{corollary}
Under the $V_T$-constrained adversary model, sublinear utility regret and sublinear queue length are simultaneously achievable  if and only if $V_T=o(T)$.
\end{corollary}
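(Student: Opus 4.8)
The plan is to prove the two implications separately: the ``if'' direction will follow by specializing Theorem~\ref{thm:max-general}, and the ``only if'' direction will follow from the lower bound in Theorem~\ref{thm:lower-general}.

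For the ``if'' direction, I would assume $V_T=o(T)$ and run the Drift-plus-Penalty algorithm with its parameter tuned to $V=\max\{1,\,V_T^{1/3}T^{2/3}\}$, the choice that balances the regret and queue-length bounds of Theorem~\ref{thm:max-general}. Substituting this $V$ into those bounds and simplifying exponents, both the utility regret and the total queue length become $O\bigl(V_T^{1/6}T^{5/6}+T^{1/2}\bigr)=O\bigl((V_T/T)^{1/6}T+T^{1/2}\bigr)$, which is $o(T)$ precisely because $V_T/T\rightarrow0$. Hence the Drift-plus-Penalty algorithm with this $V$ simultaneously attains sublinear utility regret and sublinear total queue length, which establishes achievability. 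The only elementary fact used here is that $V_T^{a}T^{1-a}=o(T)$ for every fixed $a\in(0,1]$ whenever $V_T=o(T)$.

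For the ``only if'' direction, I would argue by contraposition. Suppose $V_T$ is not $o(T)$; since $V_T\le NTB$ this forces $\limsup_{T\rightarrow\infty}V_T/T>0$, i.e.\ $V_T=\Theta(T)$ in the regime of interest (more precisely $V_T=\Omega(T)$ along a subsequence of horizons). Fix an arbitrary causal policy $\pi$. Theorem~\ref{thm:lower-general} then supplies, for each such $T$, a sequence $\{\omega_0,\cdots,\omega_{T-1}\}\in\mathcal{V}_T$ with $\mathcal{R}^\pi_T\bigl(\{\omega_0,\cdots,\omega_{T-1}\}\bigr)+c\sum_i Q_i^\pi(T)\ge c'V_T=\Omega(T)$. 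If both $\mathcal{R}^\pi_T=o(T)$ and $\sum_i Q_i^\pi(T)=o(T)$ held, the left-hand side would be $o(T)$, contradicting this $\Omega(T)$ lower bound; hence at least one of utility regret and total queue length grows linearly, so no causal policy can achieve both simultaneously. Combining the two directions yields the stated equivalence.

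I do not anticipate a serious obstacle, since essentially all the heavy lifting is already done in Theorems~\ref{thm:max-general} and~\ref{thm:lower-general}. The only genuine content is (i) exhibiting a \emph{single} value of the Drift-plus-Penalty parameter $V$ that drives \emph{both} upper bounds of Theorem~\ref{thm:max-general} below $T$ --- which the balanced choice $V$ of order $V_T^{1/3}T^{2/3}$ does, and which amounts to pure exponent arithmetic --- and (ii) observing that ``$V_T\ne o(T)$'' in the converse is meant in the regime $V_T=\Theta(T)$, since $o(T)$ and $\Omega(T)$ are not literally complementary cases.
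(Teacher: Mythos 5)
Your proposal is correct and follows essentially the same route the paper intends: the ``if'' direction specializes Theorem~\ref{thm:max-general} (Drift-plus-Penalty with a tuned $V$), and the ``only if'' direction invokes the lower bound of Theorem~\ref{thm:lower-general}. You are somewhat more careful than the paper's inline remarks in two small ways --- you exhibit a single explicit choice $V=\max\{1,V_T^{1/3}T^{2/3}\}$ that drives both bounds to $o(T)$ for arbitrary $V_T=o(T)$ (the paper only illustrates with $V_T=\Theta(\sqrt{T})$), and you flag that ``not $o(T)$'' is not literally $\Omega(T)$ and so must be handled via a subsequence --- but these are refinements, not a different argument.
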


\noindent Second, the Drift-plus-Penalty algorithm does not attain the lower bound in Theorem \ref{thm:lower-general}. For example, if $V_T=\Theta(T^{1\slash 2})$, one of the tradeoffs implied by the lower bound is that the utility regret is $\Theta(T^{1\slash 2})$ and the total queue length is also $\Theta(T^{1\slash 2})$, which is not achievable by the Drift-plus-Penalty algorithm. In fact, although  the Drift-plus-Penalty algorithm can achieve sublinear utility regret and sublinear queue length, the tradeoff bound in Theorem \ref{thm:max-general} is relatively loose. In the next section, we show that the Tracking Algorithm can achieve a better tradeoff bound than the Drift-plus-Penalty algorithm.

\vspace{2mm}

\subsubsection{Tracking Algorithm}
The Tracking Algorithm introduced under the $W$-constrained adversary model requires that the window constraints \eqref{eq:AQT} be satisfied for some window size $W$. However, there might be no window structure under the $V_T$-constrained adversary model and thus the Tracking Algorithm cannot be directly applied in $V_T$-constrained networks. We slightly modify the Tracking Algorithm in two aspects. First, the window size $W$ is set to be $W=\sqrt{TV_T}$ under the $V_T$-constrained adversary model. Second, in step \ref{step:update} of the original Tracking Algorithm, the optimization problem \eqref{eq:w-opt} is modified to be
\begin{equation}\label{eq:track-shed}
\small
\begin{split}
\max & \sum_{\tau=t-W+1}^{t} U(\alpha_\tau, \omega_\tau)\\
\text{s.t.} & \sum_{\tau=t-W+1}^{t}a_i(\alpha_\tau, \omega_\tau)\le \sum_{\tau=t-W+1}^{t}b_i(\alpha_\tau, \omega_\tau)+V_T,~\forall i\\
& \alpha_\tau\in\mathcal{D}_{\omega_\tau},~\forall \tau.
\end{split}
\end{equation}
In particular, the first constraint in \eqref{eq:w-opt} is relaxed by allowing some bursts up to $V_T$.
Note that by the definition of $V_T$-constrained networks, the optimal solution to $\textbf{NUM}$ is also a feasible solution to \eqref{eq:track-shed}.
Under the above setting, the utility regret and the total queue length achieved by the Tracking Algorithm in $V_T$-constrained networks is given by the following theorem\footnote{As is discussed in Section \ref{sec:alg-tracking-W}, the set of possible network events should be finite in order for the Tracking Algorithm to work.}.
\begin{theorem}\label{thm:tracking-general}
Under the $V_T$-constrained adversary model, the Tracking Algorithm achieves $O(\sqrt{TV_T})$ utility regret and the total queue length is $O(\sqrt{TV_T})$.
\end{theorem}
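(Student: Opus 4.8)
The plan is to reduce the $V_T$-constrained case to the $W$-constrained case using the fact that, by construction, the modified Tracking Algorithm sets $W=\sqrt{TV_T}$ and tracks the optimal solution to the relaxed window problem \eqref{eq:track-shed}, whose constraint set admits a burst of up to $V_T$. The first step is to observe that, since the constraint in \eqref{eq:track-shed} is the window constraint \eqref{eq:AQT} inflated by an additive slack $V_T$ on each queue, the sequence of optimal window-actions $\{\alpha^*_\tau\}$ computed in step \ref{step:update} behaves exactly like a $W$-constrained solution for an effective window problem whose per-window queue imbalance is at most $V_T$ (rather than $0$). In particular, by the $V_T$-constrained assumption the true optimal solution to \textbf{NUM} is feasible for \eqref{eq:track-shed}, so the window-optimal objective is at least as large as the restriction of the \textbf{NUM}-optimum to that window; summing over the $\lceil T/W\rceil$ windows shows that the total utility collected by the tracked action sequence is at least $\sum_t U(\alpha^*_t,\omega_t) - O(W)$, because the only loss comes from boundary effects in the last (possibly incomplete) window, each slot contributing at most $U_{\max}-U_{\min}$.

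The second step is to bound the number of unfulfilled actions in any action queue $\mathcal{Q}_\omega$, exactly as in the proof of Theorem \ref{thm:tracking-AQT}: optimal actions are enqueued in batch every $W$ slots, and whenever the current event is $\omega$ and $\mathcal{Q}_\omega$ is nonempty one action is dequeued, so at any time at most $W$ actions are backlogged across all queues. Hence the number of slots in which the Tracking Algorithm deviates from the tracked schedule is $O(W)$, so the utility regret is $\mathcal{R}^{TA}_T = O(W) = O(\sqrt{TV_T})$.

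The third step is the queue-length bound. Here I would compare $\mathbf{Q}^{TA}(T)$ against the queue process induced by the tracked schedule. The tracked schedule satisfies, in each window, the relaxed constraint \eqref{eq:track-shed}, so its per-queue arrival-minus-service imbalance over any single window is at most $V_T$; telescoping over windows and accounting for the at-most-$W$ unserved actions sitting in the action queues (each contributing at most $B$ to a queue) gives that every queue stays $O(W + V_T) = O(\sqrt{TV_T} + V_T)$. Since $V_T \le T$ in the nontrivial regime (otherwise $\sqrt{TV_T} \ge T$ and the bound is vacuous), $V_T = O(\sqrt{TV_T})$, so $\sum_i Q^{TA}_i(T) = O(\sqrt{TV_T})$ with the constant absorbing the factor $N$ (treated as constant).

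The main obstacle is the queue-length argument: unlike the clean $W$-constrained setting, the tracked window-optimal solution only clears backlog up to a slack $V_T$ per window, so one must carefully verify that these per-window slacks do not accumulate across the $T/W$ windows into something worse than $O(\sqrt{TV_T})$. The key observation that makes this work is the specific choice $W = \sqrt{TV_T}$, which balances the $O(W)$ tracking-backlog term against the $O((T/W)\cdot V_T) $-type accumulation — but in fact the slack does \emph{not} accumulate, because the optimal \textbf{NUM} solution is globally backlog-bounded by $V_T$, so the tracked schedule's queues inherit an $O(V_T + W)$ bound directly rather than an additive-per-window bound; making this inheritance precise (relating the physical queue under the tracked actions to the oracle's queue plus the action-queue occupancy) is the delicate step. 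Details are deferred to Appendix~\ref{ap:tracking-general}.
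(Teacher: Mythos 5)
Your overall route is the paper's route: reduce to the $W$-constrained tracking analysis with the relaxed per-window problem \eqref{eq:track-shed}, use the feasibility of the \textbf{NUM} optimum for \eqref{eq:track-shed} to compare utilities, bound the action-queue backlog as in Theorem \ref{thm:tracking-AQT}, and set $W=\sqrt{TV_T}$. The regret half of your argument is essentially correct (the only imprecision is that the backlog bound is per event type, so the regret is $O(|\Omega|W)=O(W)$ with $|\Omega|$ treated as a constant, not ``at most $W$ across all queues'').

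The queue-length half, however, contains a genuine flaw. You assert that ``the slack does not accumulate, because the optimal \textbf{NUM} solution is globally backlog-bounded by $V_T$, so the tracked schedule's queues inherit an $O(V_T+W)$ bound.'' The tracked schedule is \emph{not} the \textbf{NUM} oracle: it is, window by window, an optimizer of \eqref{eq:track-shed}, whose only coupling to the physical backlog is the per-window constraint $\sum_{\tau} a_i \le \sum_{\tau} b_i + V_T$. A utility-maximizing window optimizer will generically saturate this slack in every window (e.g., for throughput or proportional-fairness utilities it admits as much as the constraint allows), so the tracked schedule's queues can grow by up to $V_T$ per window and reach $\Theta\big((T/W)V_T\big)$ over the horizon; there is no inheritance of the oracle's $V_T$ bound. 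The correct argument — and the one in Appendix \ref{ap:tracking-general} — is precisely the accumulation-plus-balancing you dismiss: telescoping the relaxed constraint over the $T/W$ windows contributes $(T/W)V_T$ to each queue, the debt-queue/tracking backlog contributes $O(W)$, and the choice $W=\sqrt{TV_T}$ equalizes the two terms at $O(\sqrt{TV_T})$. Your final figure matches the theorem only because $V_T\le\sqrt{TV_T}$ when $V_T\le T$, i.e., your (unjustified and in fact stronger) intermediate bound $O(W+V_T)$ happens to collapse to the same order; the mechanism you give for it does not hold, so the queue-length step needs to be redone via the $(T/W)V_T$ accumulation bound.
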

\begin{proof}
The proof is similar to the analysis under the $W$-constrained adversary model, except that an additional $V_T$ terms is added in the first constraint of \eqref{eq:track-shed}. See Appendix \ref{ap:tracking-general} for details.
\end{proof}

\vspace{1mm}

There are several important observations about Theorem \ref{thm:tracking-general}. First, the Tracking Algorithm can simultaneously achieve sublinear utility regret and sublinear queue length whenever $V_T=o(T)$. 
Second, the performance of the Tracking Algorithm is better than that under the Drift-plus-Penalty algorithm in $V_T$-constrained networks. For example, if we set $W=\Theta(\sqrt{V_T T})$ and $V_T=\Theta(\sqrt{T})$, then the Tracking Algorithm achieves $O(T^{3\slash 4})$ utility regret and $O(T^{3\slash 4})$ queue length, which is not achievable by the Drift-plus-Penalty algorithm.
Finally, the Tracking Algorithm does not attain the tradeoff lower bound in Theorem \ref{thm:lower-general}.  Thus, finding a causal policy that can close the gap remains an open problem.

Note that the above Tracking Algorithm requires $V_T$ as a parameter. We discuss how to properly select the value of $V_T$ in Section \ref{sec:sensitivity}.


\subsection{Discussions}\label{discussions}
\subsubsection{Relationship between Adversary Models}\label{sec:comp}
The $V_T$-constrained adversary model generalizes the $W$-constrained adversary model: any sequence of network events that are $W$-constrained must also be $V_T$-constrained with $V_T=O(W)$ due to the window structure (note that the peak queue length under the optimal policy is at most $NWB$). The analysis in the $V_T$-constrained adversary model also gives a  more general condition for sublinear utility regret and sublinear queue length.
\subsubsection{Choosing Parameters for Tracking Algorithm}\label{sec:sensitivity}
Note that the Tracking Algorithm requires $V_T$ as a parameter. Unfortunately, in practice, it is impossible to know the precise value of $V_T$ for a given network in advance.
To alleviate this issue, we can search for the correct value of $V_T$. Note that the range for $V_T$ is $[0,NBT]$. Then one may perform binary search to find the correct value of $V_T$ by running the Tracking algorithm with different values of $V_T$ over multiple episodes within the time horizon (e.g., if the time horizon is $T=10^5$ slots, then one episode could be $10^3$ slots). Similar techniques can be applied if the Tracking Algorithm is used in $W$-constrained networks where the value of $W$ is required as input parameters.

\section{Simulations}\label{sec:sim}
In this section, we empirically validate the theoretical bounds derived in this paper and compare the performance of the Drift-plus-Penalty  and the Tracking Algorithm.

\begin{figure*}[ht!]
\subfigure[Queue Length ($W=\Theta(\sqrt{T})$) ]{\label{fig:adaptive-scaling-1}\includegraphics[width=43mm,height=33mm]{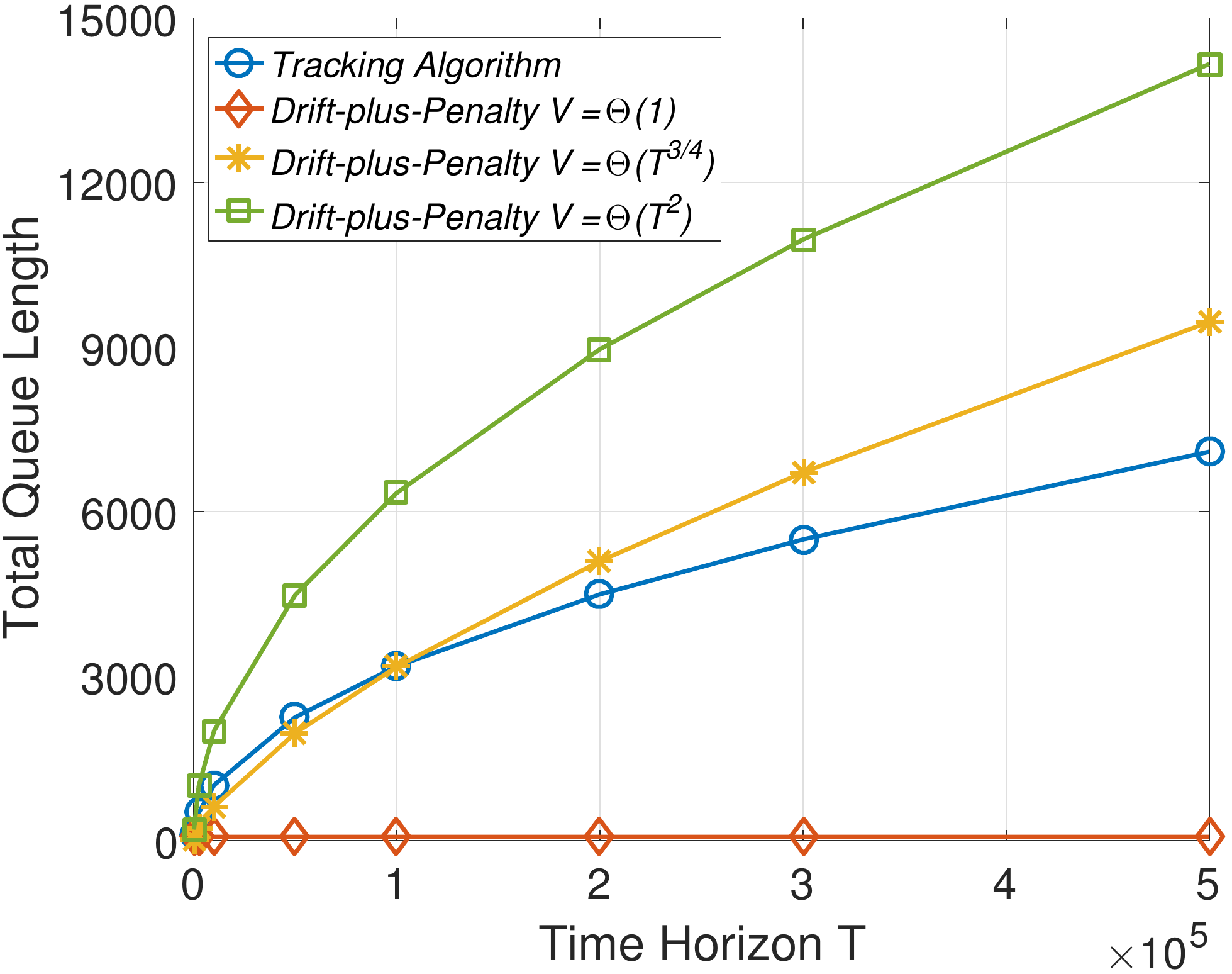}}
\subfigure[Utility Regret ($W=\Theta(\sqrt{T})$)]{\label{fig:adaptive-scaling-2}\includegraphics[width=43mm,height=35mm]{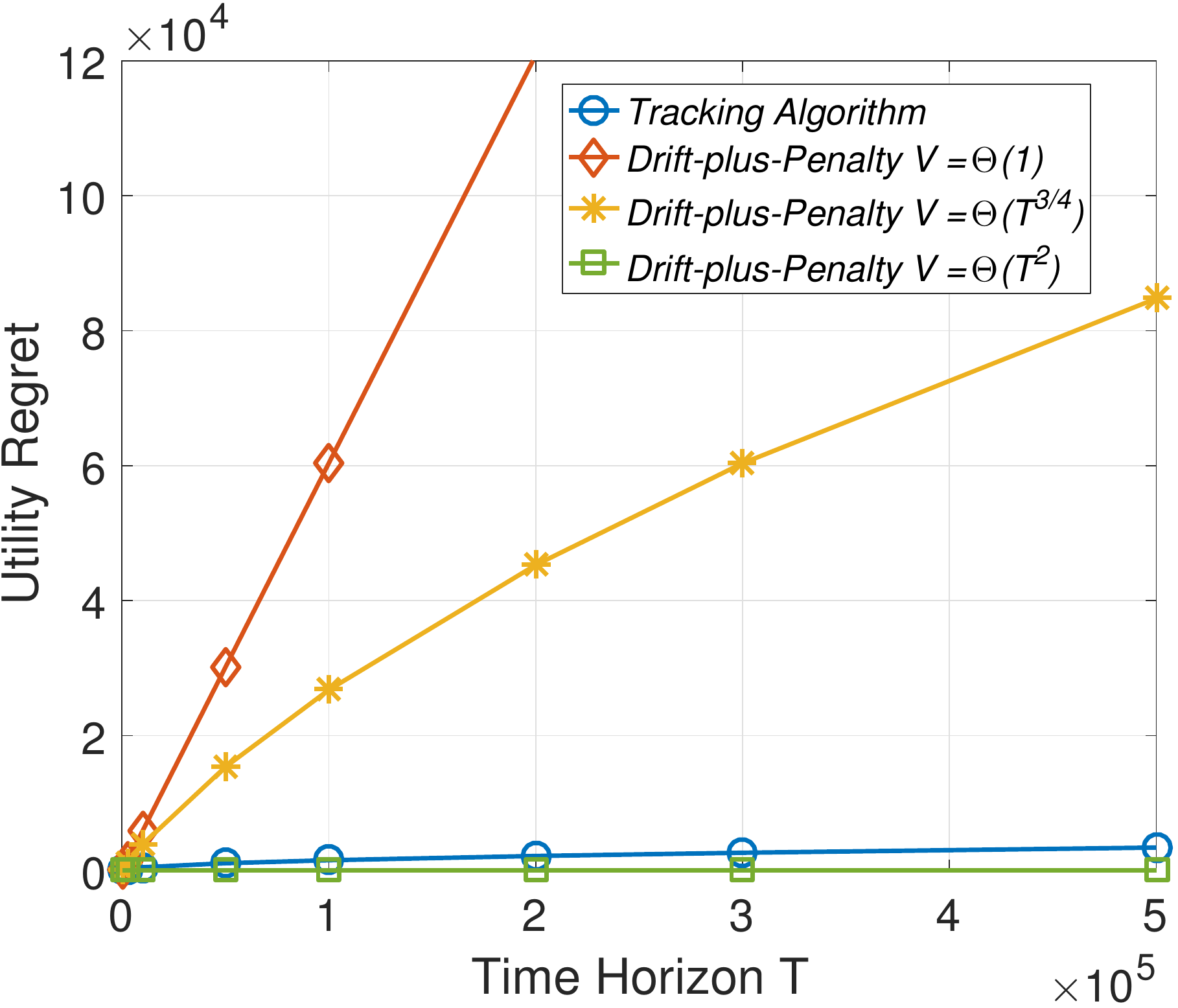}}
\subfigure[Queue Length ($W=\Theta(T)$) ]{\label{fig:adaptive-scaling-3}\includegraphics[width=43mm,height=35mm]{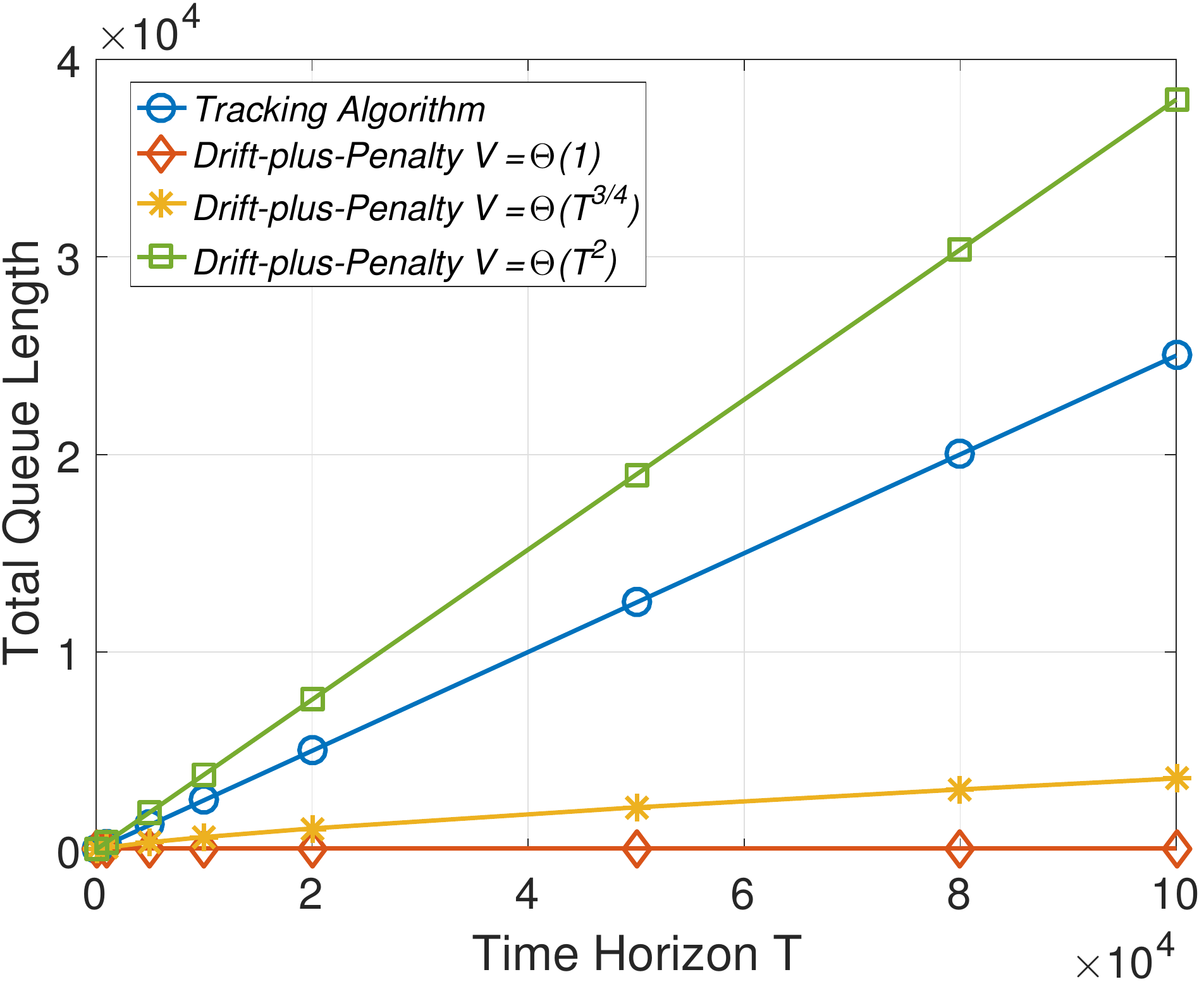}}
\subfigure[Utility Regret ($W=\Theta(T)$)]{\label{fig:adaptive-scaling-4}\includegraphics[width=43mm,height=35mm]{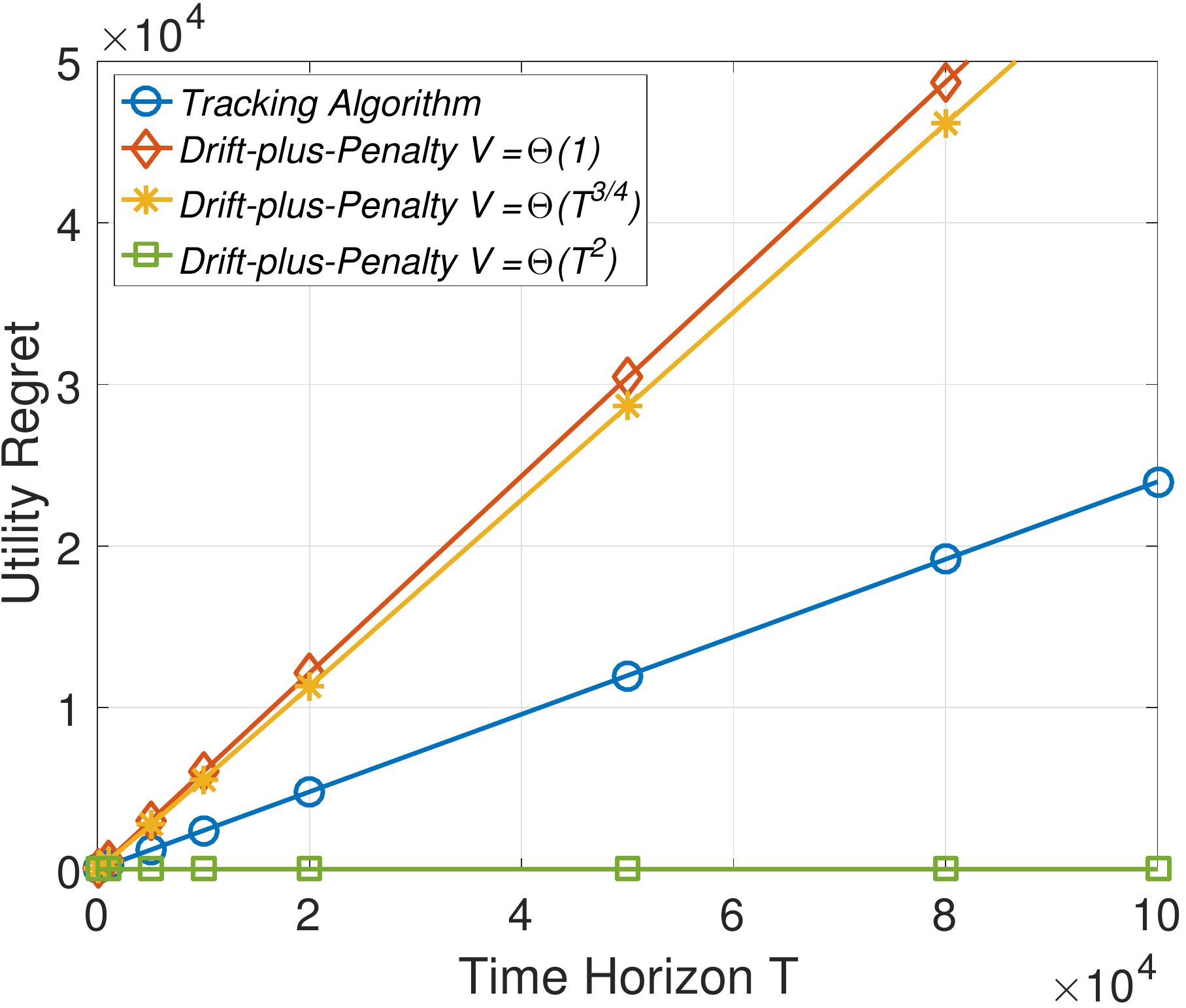}}
\vspace{-2mm}
\caption{Growth of total queue length and utility regret with the time horizon $T$ under an adaptive $W$-adversary.}
\label{fig:scaling}\vspace{-2mm}
\end{figure*}

In our simulations, we consider a single-hop network with $N=2$ users. In each slot $t$, the central controller observes the current network event $\omega_t=\big(\mathbf{A}(t),\mathbf{S}(t)\big)$, where $\mathbf{A}(t)$ is the exogenous arrival vector and $\mathbf{S}(t)$ is the channel rate vector for each link in slot $t$. Then the controller makes an admission control and a scheduling decision. The constraint on the admission control action is $0\le a_i(t)\le A_i(t)$ for each link $i$, and the constraint on the scheduling decision is that at most one of the links can be served in each slot. The network utility is $U(\alpha_t,\omega_t)=\sum_i \log\big(1+a_i(t)\big)$ (proportional fairness). We consider a scenario where the channel rate vector in each slot is controlled by an adaptive adversary. Time is divided into frames of $W$ slots. In the first $\lceil W\slash 2\rceil$ slots of each frame, the exogenous arrivals to each user are 10 packets/slot and the channel rate for each user is also 10 packets/slot. In the remaining slots of each frame, there are no exogenous arrivals to both users while the channel rate is zero for the user with a longer queue and 10 packets/slot for the other user. If the two users have the same queue length, ties are broken randomly. Such a scenario is similar to the one that we use to prove the tradeoff lower bound under the $W$-constrained adversary model (see the proof of Theorems \ref{thm:lower-AQT}), and it has been shown that this is a $W$-constrained adversary (and also a $V_T$-constrained adversary with $V_T=5W$).

Figure \ref{fig:scaling} illustrates the growth of the total queue length and the utility regret with the time horizon $T$ under the Drift-plus-Penalty algorithm (with different values of $V$) and the Tracking algorithms. First, when $W=\Theta(\sqrt{T})$, the Drift-plus-Penalty can simultaneously achieve sublinear utility regret and sublinear queue length, if the parameter $V$ is set appropriately (for example, $V=\Theta(T^{3\slash 4})$). Note that setting $V$ to some very large value (e.g., $V=\Theta(T^2)$) still achieves sublinear utility regret and sublinear queue length, though the theoretical bound on queue length (see Theorem \ref{thm:max-AQT}) is at least linear in $T$ when $V=\Omega(T)$, which shows that the performance upper bound is not tight in this scenario. The Tracking Algorithm also simultaneously achieves sublinear utility regret and sublinear queue length when $W=\Theta(\sqrt{T})$. However, when $W=\Theta(T)$, both algorithms fail to achieve desirable performance: either the utility regret or the queue length grows linearly with $T$. In fact, the lower bound in Theorem \ref{thm:lower-AQT} shows that no causal policy can achieve both sublinear utility regret and sublinear queue length if $W=\Theta(T)$.

Figure \ref{fig:tradeoff} shows the tradeoffs between utility regret and queue length under the Drift-plus-Penalty algorithm and the Tracking Algorithm, where we fix the time horizon to be $T=10^4$ slots and the window size $W=\Theta(\sqrt{T})$. Note that for the Drift-plus-Penalty algorithm, we plot a tradeoff curve (since it achieves different tradeoffs by tuning the parameter $V$), while only a single tradeoff point is plotted for the Tracking Algorithm.
It is observed that the Tracking Algorithm achieves a better tradeoff point that is not achievable by the Drift-plus-Penalty algorithm. In addition, the theoretical lower bound for any causal policy (Theorem \ref{thm:lower-AQT}) and the theoretical performance upper bounds for both algorithms (Theorems \ref{thm:max-AQT} and \ref{thm:tracking-AQT}) are also validated in the figure. 
\begin{figure}[ht!]
\begin{center}
\includegraphics[width=2.8in]{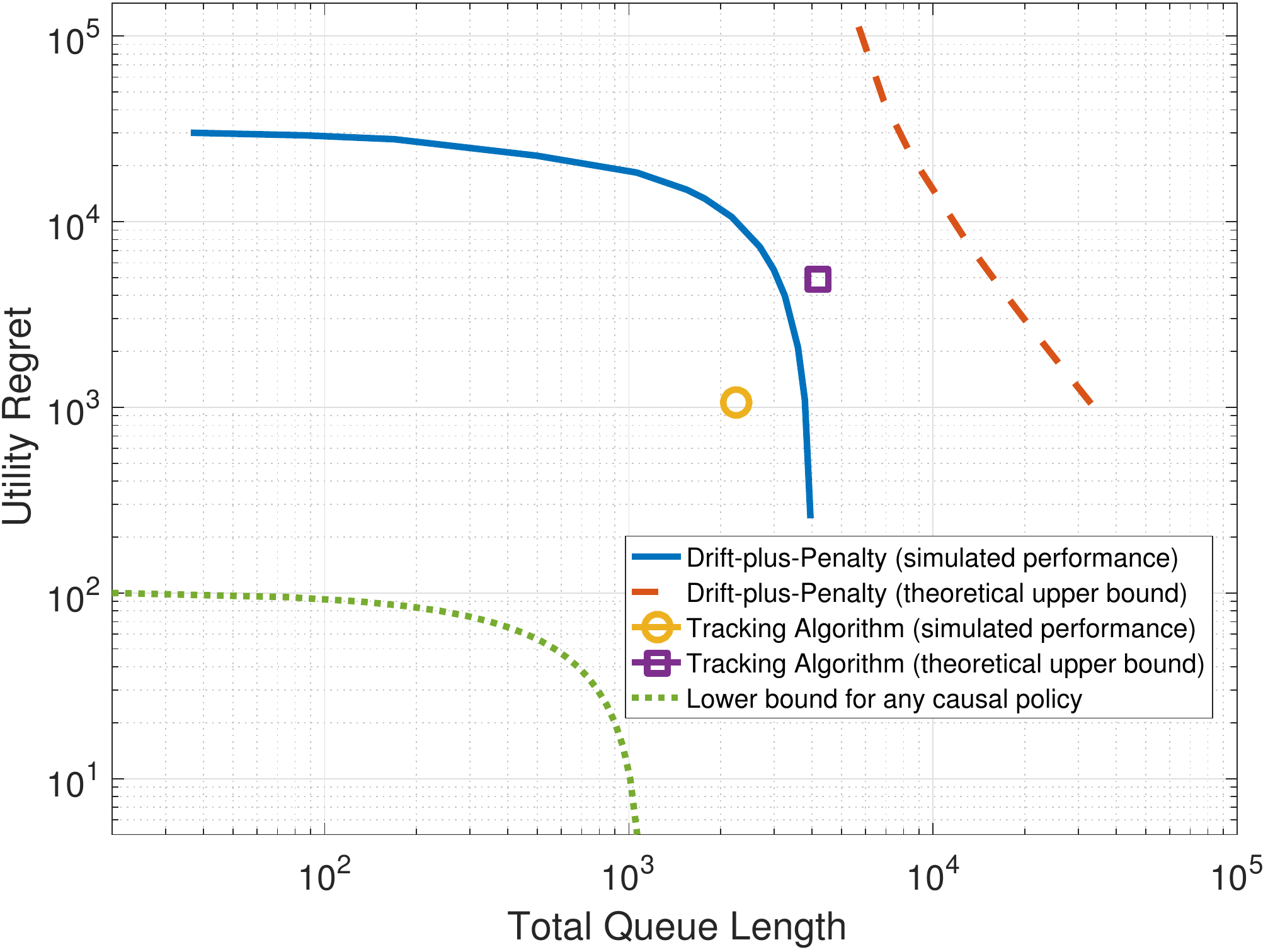}
\caption{Tradeoffs between utility and total queue length (double log scale). The time horizon is fixed to be $T=10^4$ slots and $W=\Theta(\sqrt{T})$.}
\label{fig:tradeoff}
\end{center}\vspace{-4mm}
\end{figure}

\section{Conclusions}\label{sec:conclusion}
In this paper, we focus on optimizing network utility within a finite time horizon under adversarial network models. We show that no causal policy can simultaneously achieve both sublinear utility regret and sublinear queue length if the network dynamics are unconstrained, and investigate two constrained adversary models. We first consider the restrictive $W$-constrained adversary model and then propose a more relaxed $V_T$-constrained adversary model. Lower bounds on the tradeoffs between utility regret and queue length are derived under the two adversary models, and  the  performance of two control policies is analyzed, i.e., the Drift-plus-Penalty algorithm and the Tracking Algorithm. It is shown that the Tracking Algorithm asymptotically attains the optimal tradeoffs under the $W$-constrained adversary model and that the Tracking Algorithm has a better tradeoff bound than that of the Drift-plus-Penalty


\newpage

\appendices
\section{Proof to Theorem \ref{thm:impossible}}\label{ap:impossible}
We prove this theorem by constructing a sequence of network events $\omega_0,\cdots,\omega_{T-1}$ such that either utility regret or total queue length grows at least linearly with the time horizon $T$. Consider a single-hop network with 2 links. In each slot $t$, the central controller observes the current network event $\omega_t=\big(\mathbf{A}(t),\mathbf{S}(t)\big)$, where $\mathbf{A}(t)$ is the exogenous arrival vector and $\mathbf{S}(t)$ is the channel rate vector for each link in slot $t$. Then the controller makes an admission control and a scheduling decision. The constraint on the admission control action is $0\le a_i(t)\le A_i(t)$ for each link $i$, and the constraint on the scheduling decision is that at most one of the links can be served in each slot. The network utility is a function of the admitted traffic vector $\mathbf{a}(t)$, i.e., $U(\alpha_t,\omega_t)=U(\mathbf{a}(t))=\sum_i U_i(a_i(t))$, where $U_i(x)$ is convex and strictly increasing in $x$. In particular, any subderivative of $U_i(x)$ over the range $x\in[0,B]$ is lower bounded by some constant $c>0$. Typical examples of such utility functions are $U(\mathbf{a}(t))=\sum_i a_i(t)$ (total throughput) and $U(\mathbf{a}(t))=\sum_i \log\big(a_i(t)\big)$ (proportional fairness).

Without loss generality, assume that the time horizon $T$ is an even number. The exogenous arrivals and channel rates in the first $T\slash 2$ slots are
\[
A_1(t)=A_2(t)=2,~~S_1(t)=S_2(t)=2,~\forall t=0,\cdots,\frac{T}{2}-1.
\]
For any causal policy $\pi$, let $B^\pi_1$ and $B^\pi_2$ be the number of packets cleared over link 1 and 2 during the first $T\slash 2$ slots, respectively. Also let $A^\pi_1$ and $A^\pi_2$ be the number of admitted packets to link 1 and link 2 during the first $T\slash 2$ slots, respectively. Then the queue length vector after the first $T\slash 2$ slots is
\[
Q^\pi_i(T\slash 2)=A^\pi_i-B^\pi_i,~i=1,2.
\]

Under the scheduling constraint, the total number of packets that can be cleared in the first $T\slash 2$ slots is at most $T$. Then we have $B^\pi_1+B^\pi_2\le T$, which implies that $\min\{B^\pi_1,B^\pi_2\}\le T\slash 2$. Define $i^*\triangleq \arg\min_{i} B^\pi_i$. In the remaining $T\slash 2$ slots, the adversary can set
\[
A_{i^*}(t)=0,~S_{i^*}(t)=0,~t=T\slash 2,\cdots,T-1.
\]
For the other link (its index is denoted by $i'$), the adversary can set
\[
A_{i'}(t)=0,~S_{i'}(t)=2,~t=T\slash 2,\cdots,T-1.
\]
Since there is no capacity to clear any packet over link $i^*$ in the remaining $T \slash 2$ slots, we have
\begin{equation}\label{eq:q-T-1}
Q^{\pi}_{i^*}(T)=Q^{\pi}_{i^*}(T\slash 2)=A^\pi_{i^*}-B^\pi_{i^*}.
\end{equation}
Note that the optimal non-causal policy can admit all the exogenous traffic while keeping the total queue length $\sum_i Q^*_i(T)=0$ by serving link $i^*$ in the first $T\slash 2$ slots and serving link $i'$ in the remaining $T\slash 2$ slots. As a result, the utility regret is
\begin{equation}\label{eq:b-regret-T}
\begin{split}
\mathcal{R}^\pi_T\Big(\{\omega_0,\cdots,\omega_{T-1}\}\Big)&=\sum_{t=0}^{T-1}\Big[U(\mathbf{a}^*(t))-U(\mathbf{a}^{\pi}(t))\Big]\\
&=\sum_{t=0}^{T-1}\sum_i \Big[U_i(a_i^*(t))-U_i(a_i^{\pi}(t))\Big]\\
&\ge c \sum_{t=0}^{T-1}\sum_i \Big(a_i^*(t)-a_i^{\pi}(t)\Big)\\
&= c (2T-A^\pi_1-A^\pi_2),
\end{split}
\end{equation}
where the inequality is due to the concavity of the utility function and the fact that the subderivatives of the utility function are lower-bounded by $c>0$. The last equality holds because the total admitted traffic by the optimal policy is $\sum_{t=0}^{T-1}\sum_i a_i^*(t)=2T$ while the total admitted traffic by the causal policy $\pi$ is $\sum_{t=0}^{T-1}\sum_i a_i^{\pi}(t)=A^\pi_1+A^\pi_2$.
Then it follows that
\[
\begin{split}
&\mathcal{R}^\pi_T\Big(\{\omega_0,\cdots,\omega_{T-1}\}\Big)+c \sum_i Q^{\pi}_{i}(T) \\
\ge &\mathcal{R}^\pi_T\Big(\{\omega_0,\cdots,\omega_{T-1}\}\Big)+c Q^{\pi}_{i^*}(T)\\
\ge &c (2T-A^\pi_1-A^\pi_2 + A^\pi_{i^*}-B^\pi_{i^*})\\
=& c(2T-A^\pi_{i'}-B^\pi_{i^*})\\
\ge& c(2T-T-T\slash 2)\\
= & cT\slash 2, 
\end{split}
\]
where the second inequality is due to \eqref{eq:q-T-1} and \eqref{eq:b-regret-T}, and the last inequality holds because the total admitted traffic over link $i'$ is $A^\pi_{i'}\le T$ and the amount of cleared traffic over $i^*$  is $B^\pi_{i^*}\le T\slash 2$ by the definition of $i^*$. Therefore, it is impossible for any causal policy $\pi$ to simultaneously achieve both sublinear utility regret and sublinear queue length, otherwise $\mathcal{R}^\pi_T\Big(\{\omega_0,\cdots,\omega_{T-1}\}\Big)+c \sum_i Q^{\pi}_{i}(T)=o(T)$.

\vspace{2mm}

\noindent \textbf{Remark:} Note that the above construction requires the value of $T$. We can eliminate the dependence on the time horizon $T$ by using the standard \emph{Doubling Tricks} (see Section 2.3.1 in \cite{OCO}).

\section{Proof to Theorem \ref{thm:lower-AQT}}\label{ap:lower-AQT}
We prove this theorem by constructing a sequence of network events $\{\omega_0,\cdots,\omega_{T-1}\}\in\mathcal{W}_T$ such that the lower bound is attained. Consider the same network setting as in the proof of Theorem \ref{thm:impossible}. Without loss generality, assume that the window size $W$ is an even number. The exogenous arrivals and channel rates in the first $W\slash 2$ slots are
\[
A_1(t)=A_2(t)=2,~~S_1(t)=S_2(t)=2,~\forall t=0,\cdots,\frac{W}{2}-1.
\]
For any causal policy $\pi$, let $B^\pi_1$ and $B^\pi_2$ be the number of packets cleared over link 1 and 2 during the first $W\slash 2$ slots, respectively. Also let $A^\pi_1$ and $A^\pi_2$ be the number of admitted packets to link 1 and link 2 during the first $W\slash 2$ slots, respectively. Then the queue length vector after the first $W\slash 2$ slots is
\[
Q^\pi_i(W\slash 2)=A^\pi_i-B^\pi_i,~i=1,2.
\]

Under the scheduling constraint, the total number of packets that can be cleared in the first $W\slash 2$ slots is at most $W$. Then we have $B^\pi_1+B^\pi_2\le W$, which implies that $\min\{B^\pi_1,B^\pi_2\}\le W\slash 2$. Define $i^*\triangleq \arg\min_{i} B^\pi_i$. In the remaining $T-W\slash 2$ slots, the adversary can set
\[
A_{i^*}(t)=0,~S_{i^*}(t)=0,~t=W\slash 2,\cdots,T-1.
\]
For the other link (its index is denoted by $i'$), the adversary can set
\[
A_{i'}(t)=0,~S_{i'}(t)=2,~t=W\slash 2,\cdots,T-1.
\]
Since there is no capacity to clear any packet over link $i^*$ in the remaining $T-W \slash 2$ slots, we have
\begin{equation}\label{eq:q-W-1}
Q^{\pi}_{i^*}(T)=Q^{\pi}_{i^*}(W\slash 2)=A^\pi_{i^*}-B^\pi_{i^*}.
\end{equation}
Note that the optimal non-causal policy can admit all the exogenous traffic while satisfying the window constraints \eqref{eq:AQT} by serving link $i^*$ in the first $W\slash 2$ slots and serving link $i'$ in the remaining $T-W\slash 2$ slots. As a result, the utility regret is
\begin{equation}\label{eq:b-regret-W}
\begin{split}
\mathcal{R}^\pi_T\Big(\{\omega_0,\cdots,\omega_{T-1}\}\Big)&=\sum_{t=0}^{T-1}\Big[U(\mathbf{a}^*(t))-U(\mathbf{a}^{\pi}(t))\Big]\\
&=\sum_{t=0}^{T-1}\sum_i \Big[U_i(a_i^*(t))-U_i(a_i^{\pi}(t))\Big]\\
&\ge c \sum_{t=0}^{T-1}\sum_i \Big(a_i^*(t)-a_i^{\pi}(t)\Big)\\
&= c (2W-A^\pi_1-A^\pi_2),
\end{split}
\end{equation}
where the inequality is due to the concavity of the utility function and the fact that the subderivatives of the utility function are lower-bounded by $c>0$. The last equality holds because the total admitted traffic by the optimal policy is $\sum_{t=0}^{T-1}\sum_i a_i^*(t)=2W$ while the total admitted traffic by the causal policy $\pi$ is $\sum_{t=0}^{T-1}\sum_i a_i^{\pi}(t)=A^\pi_1+A^\pi_2$.
Then it follows that
\[
\begin{split}
&\mathcal{R}^\pi_T\Big(\{\omega_0,\cdots,\omega_{T-1}\}\Big)+c \sum_i Q^{\pi}_{i}(T) \\
\ge &\mathcal{R}^\pi_T\Big(\{\omega_0,\cdots,\omega_{T-1}\}\Big)+c Q^{\pi}_{i^*}(T)\\
\ge &c (2W-A^\pi_1-A^\pi_2 + A^\pi_{i^*}-B^\pi_{i^*})\\
=& c(2W-A^\pi_{i'}-B^\pi_{i^*})\\
\ge& c(2W-W-W\slash 2)\\
= & cW\slash 2\triangleq c'W,
\end{split}
\]
where the second inequality is due to \eqref{eq:q-W-1} and \eqref{eq:b-regret-W}, and the last inequality holds because the total admitted traffic over link $i'$ is $A^\pi_{i'}\le W$ and the amount of cleared traffic over $i^*$ is $B^\pi_{i^*}\le W\slash 2$ by the definition of $i^*$. This completes the proof.
\section{Proof to Theorem \ref{thm:max-AQT}}\label{ap:max-AQT}
Let $\alpha_t$, $\mathbf{Q}(t)$, $\mathbf{a}(t)$ and $\mathbf{b}(t)$ be the control action, the queue length vector, the controlled arrival vector and the service vector in slot $t$ under the Drift-plus-Penalty algorithm, respectively. Also define the potential function
\[
\Phi(\mathbf{Q}(t)) = \frac{1}{2}\sum_{i\in\mathcal{N}} Q_i^2(t).
\]
We first provide an upper bound for the $W$-slot drift $\Phi(\mathbf{Q}(t+W))-\Phi(\mathbf{Q}(t))$.
\begin{lemma}\label{lm:drift}
The $W$-slot drift satisfies
\[
\begin{split}
&\Phi(\mathbf{Q}(t+W))-\Phi(\mathbf{Q}(t))\\
\le &\sum_{\tau=t}^{t+W-1} \sum_{i=1}^N Q_i(\tau) \Big(a_i(\tau) - b_i(\tau)\Big) + 4NB^2 W^2.
\end{split}
\]
\end{lemma}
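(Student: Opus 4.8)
The plan is to expand the potential function telescopically over the $W$-slot window and then control each one-slot increment. First I would write
\[
\Phi(\mathbf{Q}(t+W))-\Phi(\mathbf{Q}(t)) = \sum_{\tau=t}^{t+W-1}\Big(\Phi(\mathbf{Q}(\tau+1))-\Phi(\mathbf{Q}(\tau))\Big),
\]
so it suffices to bound the one-slot drift $\Phi(\mathbf{Q}(\tau+1))-\Phi(\mathbf{Q}(\tau))$ for a generic slot $\tau$ and sum the bounds. For a single slot, I would use the standard queueing inequality: from $Q_i(\tau+1)=[Q_i(\tau)+a_i(\tau)-b_i(\tau)]^+$ one gets $Q_i^2(\tau+1)\le \big(Q_i(\tau)+a_i(\tau)-b_i(\tau)\big)^2 = Q_i^2(\tau) + 2Q_i(\tau)(a_i(\tau)-b_i(\tau)) + (a_i(\tau)-b_i(\tau))^2$. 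Summing over $i$ and halving, the one-slot drift is at most $\sum_i Q_i(\tau)(a_i(\tau)-b_i(\tau)) + \tfrac12\sum_i (a_i(\tau)-b_i(\tau))^2$, and since $0\le a_i(\tau)\le B$ and $0\le b_i(\tau)\le B$ we have $(a_i(\tau)-b_i(\tau))^2\le B^2$, so the last term is at most $NB^2/2$.

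At this stage a naive sum over the $W$ slots would give the cross term $\sum_{\tau=t}^{t+W-1}\sum_i Q_i(\tau)(a_i(\tau)-b_i(\tau))$ plus an additive constant of only $NB^2W/2$, which is far smaller than the claimed $4NB^2W^2$ — but the cross term as written is exactly what appears in the statement, so in fact the telescoping already delivers the desired form with room to spare; the generous $4NB^2W^2$ slack presumably anticipates a reformulation elsewhere in the appendix where the weighting queue length $Q_i(\tau)$ is replaced by $Q_i(t)$ (the queue length at the \emph{start} of the window). If that substitution is the intended form, then the key step I would carry out is to compare $Q_i(\tau)$ with $Q_i(t)$: since queues change by at most $B$ per slot, $|Q_i(\tau)-Q_i(t)|\le B(\tau-t)\le BW$ for every $\tau$ in the window, so $\sum_{\tau=t}^{t+W-1}\sum_i \big(Q_i(\tau)-Q_i(t)\big)(a_i(\tau)-b_i(\tau))$ is bounded in absolute value by $W\cdot N\cdot BW\cdot B = NB^2W^2$, and combined with the $NB^2W/2\le NB^2W^2$ term this comfortably fits under $4NB^2W^2$.

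The main obstacle is really just bookkeeping: making sure the window $[t,t+W-1]$ is handled cleanly at the boundaries (the lemma as stated uses $Q_i(\tau)$ directly, so strictly speaking only the elementary one-slot bound and the telescoping are needed, and the constant can be taken far smaller). I would therefore present the clean version — telescope, apply the one-slot inequality, bound $(a_i-b_i)^2\le B^2$, and note $NB^2W/2\le 4NB^2W^2$ — and remark that the loose constant is chosen for uniformity with the subsequent lemmas that rewrite the drift in terms of $Q_i(t)$. No appeal to the $W$-constrained structure \eqref{eq:AQT} or to the Drift-plus-Penalty action rule \eqref{eq:dpp} is needed for this lemma; those enter only when the penalty term is added and the optimal actions are plugged in downstream.
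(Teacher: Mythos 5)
Your proof is correct, and it takes a genuinely different (and in fact tighter) route than the paper. The paper does not telescope over slots: it bounds the window drift in one shot, via a case analysis on whether $Q_i(t)$ exceeds the total service $\sum_{\tau=t}^{t+W-1}b_i(\tau)$, obtaining $Q_i^2(t+W)\le \big[Q_i(t)+\sum_\tau a_i(\tau)-\sum_\tau b_i(\tau)\big]^2+\big[\sum_\tau b_i(\tau)+\sum_\tau a_i(\tau)\big]^2$, which gives the cross term weighted by the \emph{start-of-window} queue $Q_i(t)$ plus $2NB^2W^2$; it then uses $|Q_i(\tau)-Q_i(t)|\le WB$ to swap $Q_i(t)$ for $Q_i(\tau)$ at an additional cost of $2NB^2W^2$, which is where the constant $4NB^2W^2$ comes from. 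Your telescoping of the standard one-slot inequality $Q_i^2(\tau+1)\le\big(Q_i(\tau)+a_i(\tau)-b_i(\tau)\big)^2$ produces the stated cross term $\sum_\tau\sum_i Q_i(\tau)\big(a_i(\tau)-b_i(\tau)\big)$ directly, with an additive term of only $NB^2W/2\le 4NB^2W^2$, so the lemma follows a fortiori; no queue-shift step is needed at this stage. Your reading of the loose constant is essentially right, with one nuance: the $Q_i(\tau)$-weighted form is exactly what the Drift-plus-Penalty rule \eqref{eq:dpp} optimizes slot by slot, and it is in Lemma~\ref{lm:drift-penalty} that the paper shifts to $Q_i(t)$ (again paying $O(NB^2W^2)$) in order to invoke the window constraint \eqref{eq:AQT}; your route simply avoids the paper's redundant shift-and-shift-back between $Q_i(t)$ and $Q_i(\tau)$, at no loss since all these constants are absorbed in the big-$O$ bounds downstream.
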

\begin{proof}
For any $i\in \mathcal{N}$,  if $Q_i(t) \ge \sum_{\tau=t}^{t+W-1} b_i(\tau)$, then
\[
Q^2_i(t+W) = \Big[Q_i(t) + \sum_{\tau=t}^{t+W-1}a_i(\tau) - \sum_{\tau=t}^{t+W-1}b_i(\tau)\Big]^2.
\]
If $Q_i(t) < \sum_{\tau=t}^{t+W-1} b_i(\tau)$, then
\[
\begin{split}
Q^2_i(t+W) &\le \Big[Q_i(t) + \sum_{\tau=t}^{t+W-1} a_i(\tau)\Big]^2 \\
&<  \Big[\sum_{\tau=t}^{t+W-1} b_i(\tau) + \sum_{\tau=t}^{t+W-1}a_i(\tau)\Big]^2.
\end{split}
\]
Thus, in any case, we have
\[
\begin{split}
Q^2_i(t+W)&\le  \Big[Q_i(t) + \sum_{\tau=t}^{t+W-1} a_i(\tau) - \sum_{\tau=t}^{t+W-1} b_i(\tau)\Big]^2\\
&+\Big[\sum_{\tau=t}^{t+W-1} b_i(\tau) + \sum_{\tau=t}^{t+W-1}a_i(\tau)\Big]^2.
\end{split}
\]
Then the $W$-slot drift is
\[
\begin{split}
&\Phi(\mathbf{Q}(t+W))-\Phi(\mathbf{Q}(t)) \\
= &\frac{1}{2}\sum_{i} Q_i^2(t+W) - \frac{1}{2}\sum_{i} Q_i^2(t)\\
\le &\frac{1}{2}\sum_{i}\Big[Q_i(t) + \sum_{\tau=t}^{t+W-1} a_i(\tau) - \sum_{\tau=t}^{t+W-1} b_i(\tau)\Big]^2\\
&~~ + \frac{1}{2}\sum_i \Big[\sum_{\tau=t}^{t+W-1} b_i(\tau) +  \sum_{\tau=t}^{t+W-1}a_i(\tau)\Big]^2 - \frac{1}{2}\sum_{i} Q_i^2(t)\\
\le &\sum_{\tau=t}^{t+W-1}  \sum_{i} Q_i(t) \Big( a_i(\tau) -  b_i(\tau)\Big) + 2N B^2 W^2.
\end{split}
\]
Note that for any $\tau\in[t,t+W-1]$ and any $i\in\mathcal{N}$ we have
\begin{equation}\label{eq:q}
Q_i(\tau)-WB\le Q_i(t)\le Q_i(\tau)+WB.
\end{equation}
Then it follows that
\[
\small
\begin{split}
&\sum_{\tau=t}^{t+W-1}  \sum_{i} Q_i(t) \Big( a_i(\tau) -  b_i(\tau)\Big)\\
\le &\sum_{\tau=t}^{t+W-1}  \sum_i  \Big[\big(Q_i(\tau)+WB\big) a_i(\tau)-\big(Q_i(\tau)-WB\big)b_i(\tau)\Big]\\
=& \sum_{\tau=t}^{t+W-1}  \sum_i Q_i(\tau)\Big(a_i(\tau)-b_i(\tau)\Big)+2NB^2W^2,
\end{split}
\]
where the first inequality is due to \eqref{eq:q}.
Therefore, the $W$-slot drift is
\[
\begin{split}
&\Phi(\mathbf{Q}(t+W))-\Phi(\mathbf{Q}(t))\\
\le &\sum_{\tau=t}^{t+W-1} \sum_{i=1}^N Q_i(\tau) \Big(a_i(\tau) - b_i(\tau)\Big) + 4NB^2 W^2.
\end{split}
\]
This completes the proof to Lemma \ref{lm:drift}.
\end{proof}

\vspace{3mm}

Let $\alpha_t^*$, $\mathbf{a^*}(t)$ and $\mathbf{b^*}(t)$ be control action, the controlled arrival vector and the service vector in slot $t$ under the optimal non-causal policy. Then we derive an upper bound on the $W$-slot drift-plus-penalty term.
\begin{lemma}\label{lm:drift-penalty}
The $W$-slot drift-plus-penalty term satisfies
\[
\begin{split}
&\Phi(\mathbf{Q}(t+W))-\Phi(\mathbf{Q}(t)) - \sum_{\tau=t}^{t+W-1}VU(\alpha_t,\omega_t)\\
\le &-\sum_{\tau=t}^{t+W-1} VU(\alpha^*_t,\omega_t)+ 6NB^2 W^2.
\end{split}
\]
\end{lemma}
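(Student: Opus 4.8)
The plan is to combine the $W$-slot drift bound of Lemma~\ref{lm:drift} with the slot-by-slot optimality of the control rule \eqref{eq:dpp}, and then to use the window constraint \eqref{eq:AQT} to dispose of the leftover cross terms. First I would subtract $\sum_{\tau=t}^{t+W-1}VU(\alpha_\tau,\omega_\tau)$ from both sides of Lemma~\ref{lm:drift}, which gives
\[
\Phi(\mathbf{Q}(t+W))-\Phi(\mathbf{Q}(t))-\sum_{\tau=t}^{t+W-1}VU(\alpha_\tau,\omega_\tau)\le\sum_{\tau=t}^{t+W-1}\Big[\sum_i Q_i(\tau)\big(a_i(\tau)-b_i(\tau)\big)-VU(\alpha_\tau,\omega_\tau)\Big]+4NB^2W^2,
\]
so it suffices to bound the bracketed sum by $-\sum_{\tau=t}^{t+W-1}VU(\alpha^*_\tau,\omega_\tau)+2NB^2W^2$.

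Next, for each slot $\tau$ the Drift-plus-Penalty rule chooses $\alpha_\tau$ to maximize $\sum_i Q_i(\tau)\big(b_i(\alpha_\tau,\omega_\tau)-a_i(\alpha_\tau,\omega_\tau)\big)+VU(\alpha_\tau,\omega_\tau)$ over $\alpha_\tau\in\mathcal{D}_{\omega_\tau}$; since the oracle's action $\alpha^*_\tau$ also lies in $\mathcal{D}_{\omega_\tau}$, the action actually taken does at least as well, that is,
\[
\sum_i Q_i(\tau)\big(a_i(\tau)-b_i(\tau)\big)-VU(\alpha_\tau,\omega_\tau)\le\sum_i Q_i(\tau)\big(a^*_i(\tau)-b^*_i(\tau)\big)-VU(\alpha^*_\tau,\omega_\tau).
\]
Summing this over the window and substituting into the previous display leaves me to show $\sum_{\tau=t}^{t+W-1}\sum_i Q_i(\tau)\big(a^*_i(\tau)-b^*_i(\tau)\big)\le 2NB^2W^2$. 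For this I would write $Q_i(\tau)=Q_i(t)+\big(Q_i(\tau)-Q_i(t)\big)$: the $Q_i(t)$ part contributes $Q_i(t)\sum_{\tau=t}^{t+W-1}\big(a^*_i(\tau)-b^*_i(\tau)\big)$, which is $\le 0$ because $Q_i(t)\ge 0$ and the event sequence is $W$-constrained, so \eqref{eq:AQT} holds on the window $[t,t+W-1]$; and the fluctuation part is controlled by the boundedness assumption, $|Q_i(\tau)-Q_i(t)|\le WB$ and $|a^*_i(\tau)-b^*_i(\tau)|\le B$, so it contributes at most $WB^2$ per slot, i.e.\ at most $NW^2B^2$ over all slots and queues. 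Adding the $4NB^2W^2$ from Lemma~\ref{lm:drift} yields the claimed $6NB^2W^2$.

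The main obstacle is this last step: unlike the classical one-slot drift argument, the weighted imbalance $\sum_i Q_i(\tau)(a^*_i(\tau)-b^*_i(\tau))$ neither vanishes slot-by-slot nor telescopes across the window, so the window structure \eqref{eq:AQT} is essential to make the dominant $Q_i(t)$-weighted piece nonpositive, after which the at-most-$O(W)$ intra-window variation of each queue is absorbed into the $O(NB^2W^2)$ error. A minor technical point is the treatment of windows $[t,t+W-1]$ that run past the horizon $T$; there one can restrict to windows aligned so that $t+W\le T$ (which is all the drift telescoping in the proof of Theorem~\ref{thm:max-AQT} needs) or pad the sequence, so I would not expect it to cause real difficulty.
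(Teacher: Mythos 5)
Your proposal is correct and follows essentially the same route as the paper: start from Lemma~\ref{lm:drift}, invoke the slot-by-slot optimality of the Drift-plus-Penalty rule against the oracle's actions, shift $Q_i(\tau)$ to $Q_i(t)$ paying an $O(NB^2W^2)$ error for the intra-window queue drift, and then kill the $Q_i(t)$-weighted imbalance using the window constraint \eqref{eq:AQT}. The only cosmetic difference is that you decompose $Q_i(\tau)=Q_i(t)+(Q_i(\tau)-Q_i(t))$ and bound $|a^*_i-b^*_i|\le B$ directly (yielding $NB^2W^2$), whereas the paper bounds $a^*_i$ and $b^*_i$ separately via $Q_i(\tau)\in[Q_i(t)-WB,\,Q_i(t)+WB]$ (yielding $2NB^2W^2$); both land within the stated $6NB^2W^2$.
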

\begin{proof}
We have
\[
\small
\begin{split}
&\Phi(\mathbf{Q}(t+W))-\Phi(\mathbf{Q}(t)) - \sum_{\tau=t}^{t+W-1}VU(\alpha_t,\omega_t)\\
\le & \sum_{\tau=t}^{t+W-1} \Big[\sum_{i=1}^N Q_i(\tau) \Big(a_i(\tau) - b_i(\tau)\Big) - VU(\alpha_t,\omega_t)\Big]+ 4NB^2 W^2\\
\le & \sum_{\tau=t}^{t+W-1} \Big[\sum_{i=1}^N Q_i(\tau) \Big(a^*_i(\tau) - b^*_i(\tau)\Big) - VU(\alpha^*_t,\omega_t)\Big]+ 4NB^2 W^2,
\end{split}
\]
where the first inequality is due to Lemma \ref{lm:drift} and the second inequality is due to the operation of the Drift-plus-Penalty policy \eqref{eq:dpp}. By \eqref{eq:q}, we have
\[
\small
\begin{split}
&Q_i(\tau) \Big(a^*_i(\tau) - b^*_i(\tau)\Big)\\
\le & (Q_i(t)+WB)a^*_i(\tau)-(Q_i(t)-WB)b^*_i(\tau)\\
\le & Q_i(t) \Big(a^*_i(\tau) - b^*_i(\tau)\Big)+2WB^2.
\end{split}
\]
Plugging the above inequality into the drift-plus-penalty term, we have
\[
\small
\begin{split}
&\Phi(\mathbf{Q}(t+W))-\Phi(\mathbf{Q}(t)) - \sum_{\tau=t}^{t+W-1}VU(\alpha_t,\omega_t)\\
\le & \sum_{i=1}^N Q_i(t) \sum_{\tau=t}^{t+W-1}  \Big(a^*_i(\tau) - b^*_i(\tau)\Big) \\
&-  \sum_{\tau=t}^{t+W-1} VU(\alpha^*_t,\omega_t)+ 6NB^2 W^2\\
\le & -\sum_{\tau=t}^{t+W-1} VU(\alpha^*_t,\omega_t)+ 6NB^2 W^2,
\end{split}
\]
where the last inequality holds because of the window constraints \eqref{eq:AQT}. This completes the proof of Lemma \ref{lm:drift-penalty}.
\end{proof}

\vspace{2mm}

We then divide the time horizon into frames of size $W$ slots. Without loss of generality, assume that $W$ divides $T$ such that the total number of frames is $T\slash W $. Summing the drift-plus-penalty term in Lemma \ref{lm:drift-penalty} over $t=0,W,\cdots,T-W$ and noticing that $\mathbf{Q}(0)=\mathbf{0}$,  we have
\begin{equation}\label{eq:partial}
\begin{split}
&\Phi(\mathbf{Q}(T)-V\sum_{\tau=0}^{T-1}U(\alpha_t,\omega_t)\\
\le &-V\sum_{\tau=0}^{T-1}U(\alpha^*_t,\omega_t)+6NB^2 W^2T\slash W\\
= &-V\sum_{\tau=0}^{T-1}U(\alpha^*_t,\omega_t)+6NB^2 WT.
\end{split}
\end{equation}
Thus, we have
\[
\Phi(\mathbf{Q}(T))\le VT(U_{\max}-U_{\min})+6NB^2 WT,
\]
which implies that
\[
\sum_i Q_i(T)\le \sqrt{N}\sqrt{2\Phi(\mathbf{Q}(t_{K}))}=O\Big(\sqrt{T(V+W)}\Big).
\]
Similarly, by \eqref{eq:partial} we have
\[
\sum_{\tau=0}^{T-1}U(\alpha^*_t,\omega_t)-\sum_{\tau=0}^{T-1}U(\alpha_t,\omega_t)\le 6NB^2 WT\slash V,
\]
which implies that for any sequence of network events $\omega_0,\cdots,\omega_{T-1}$ the utility regret is
\[
\begin{split}
\mathcal{R}_T\Big(\{\omega_0,\cdots,\omega_{T-1}\}\Big)\le &6NB^2 WT\slash V\\
= &O\Big(\frac{WT}{V}\Big).
\end{split}
\]
This completes the proof to Theorem \ref{thm:max-AQT}.

\section{Proof to Theorem \ref{thm:tracking-AQT}}\label{ap:tracking-AQT}
For convenience of notation, let $\alpha_t$, $\mathbf{Q}(t)$, $\mathbf{a}(t)$ and $\mathbf{b}(t)$ be the control action, the queue length vector, the arrival vector and the service vector in slot $t$ under the Tracking Algorithm. Also let $\alpha^*_t$, $\mathbf{Q}^*(t)$, $\mathbf{a}^*(t)$ and $\mathbf{b}^*(t)$ be the control action, the queue length vector, the arrival vector and the service vector in slot $t$ under the optimal solution \eqref{eq:w-opt}. It can be verified that $\{\alpha^*_t\}_{t=0}^{T-1}$ is also an optimal solution to \textbf{NUM}. Time is divided into frames of size $W$ slots.  Without loss of generality, we assume that $W$ divided $T$ and the total number of frames is $R=T\slash W$. 

We introduce two types of ``debt" queues that measure the performance difference between the Tracking Algorithm and the optimal policy. The first type of debt queue is denoted by $q_{i,\omega}(t)$, which measures the ``debt" owned by the Tracking algorithm to the optimal policy w.r.t. the amount of resources allocated to queue $i$ under network event $\omega$ up to time $t$. Its evolution is as follows. 
\begin{itemize}
\item{If $\omega_t\ne \omega$, then 
\[
q_{i,\omega}(t+1) = q_{i,\omega}(t).
\]}
\item{If $\omega_t=\omega$, then 
\[
q_{i,\omega}(t+1) = q_{i,\omega}(t)+\Big(b^*_i(t)-a^*_i(t)\Big)-\Big(b_i(t)-a_i(t)\Big),
\]
where $b^*_i(t)-a^*_i(t)$ is the new debt arrival in slot $t$ and $b_i(t)-a_i(t)$ is the cleared debt in slot $t$. }
\end{itemize}

The second type of debt queue is denote by $u_\omega(t)$, which measures the ``debt" owned by the Tracking algorithm to the optimal policy w.r.t. the gained network utility under network event $\omega$ up to time $t$. Its evolution is as follows.
\begin{itemize}
\item{If $\omega_t\ne \omega$, then 
\[
u_\omega(t+1)=u_{\omega}(t).
\]}

\item{If $\omega_t=\omega$, then 
\[
u_\omega(t+1) = u_\omega(t)+U(\alpha^*_t,\omega)-U(\alpha_t,\omega).
\]}
\end{itemize}

The following lemma gives an upper bound on the length of the two virtual queues.
\begin{lemma}\label{lm:tracking-AQT}
For any $t\in\mathcal{T}$ and any type of network event $\omega\in\Omega$, we have
\[
\begin{split}
q_{i,\omega}(t)&\le WB,~\forall i\in\mathcal{N}\\
u_\omega(t)&\le WU_{\max}.
\end{split}
\]
\end{lemma}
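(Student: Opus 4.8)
The plan is to prove a single structural invariant about the action queues $\mathcal{Q}_\omega$ maintained by the Tracking Algorithm and then read both bounds off it. Fix an event type $\omega\in\Omega$ and partition the horizon into frames $\mathcal{F}_k=\{kW,\dots,(k+1)W-1\}$ (recall $W\mid T$, as assumed in the proof of Theorem~\ref{thm:tracking-AQT}), and let $n_\omega^{(k)}$ be the number of slots $\tau\in\mathcal{F}_k$ with $\omega_\tau=\omega$. Since $\sum_\omega n_\omega^{(k)}=W$, each $n_\omega^{(k)}\le W$. By step~\ref{step:tracking1}, the optimal actions of frame $\mathcal{F}_k$ are pushed into the action queues in one batch at time $(k+1)W-1$, so $\mathcal{Q}_\omega$ gains exactly $n_\omega^{(k)}$ actions at the end of $\mathcal{F}_k$ and is decremented once at each $\omega$-slot of $\mathcal{F}_k$ at which it is nonempty. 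Letting $L_k$ denote the length of $\mathcal{Q}_\omega$ at the start of $\mathcal{F}_k$, a two-case computation (according to whether $L_k\ge n_\omega^{(k)}$) yields the recursion $L_{k+1}=\max\{L_k,\,n_\omega^{(k)}\}$ with $L_0=0$, hence $L_k=\max\{n_\omega^{(0)},\dots,n_\omega^{(k-1)}\}\le W$ for all $k$; since $\mathcal{Q}_\omega$ only shrinks within a frame before the end-of-frame refill, $|\mathcal{Q}_\omega(t)|\le W$ at every slot $t$.

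Next I would translate this into the debt-queue bounds. Because actions enter the queues frame by frame and in time order, and are served FIFO (step~\ref{step:greedy}), the $\omega$-actions the Tracking Algorithm has replayed up to time $t$ are exactly a prefix of the sequence of $\omega$-actions of the optimal solution. Letting $\tau_1<\tau_2<\cdots$ enumerate the slots with $\omega_\tau=\omega$, writing $N_\omega(t)$ for the number of such slots up to $t$ and $D_\omega(t)$ for the number of $\omega$-actions replayed up to $t$, the definition of $q_{i,\omega}$ telescopes to
\[
q_{i,\omega}(t)=\sum_{j=D_\omega(t)+1}^{N_\omega(t)}\big(b_i^{*}(\tau_j)-a_i^{*}(\tau_j)\big).
\]
The count $N_\omega(t)-D_\omega(t)$ equals the number of $\omega$-actions still sitting in $\mathcal{Q}_\omega$ plus the number of $\omega$-slots already seen in the current, not-yet-closed frame; tracking the state of $\mathcal{Q}_\omega$ inside that frame shows this count is $\max\{L_k,m\}$, where $m\le n_\omega^{(k)}$ is the partial $\omega$-count of the current frame, and hence is at most $W$ (note the naive bound would be $2W$; the point is that these two quantities cannot both be large simultaneously). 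Since $0\le a_i^{*}(\tau_j)$ and $b_i^{*}(\tau_j)\le B$, each summand is at most $B$, giving $q_{i,\omega}(t)\le WB$. The identical argument for $u_\omega$---with $U(\alpha_t,\omega)$ on an empty queue equal to the utility of the null action, and using $U\le U_{\max}$---yields $u_\omega(t)\le W U_{\max}$ (after normalizing $U_{\min}\ge 0$; in general one gets $W(U_{\max}-U_{\min})$, which suffices).

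The step I expect to be the main obstacle is the bookkeeping in the second paragraph: one has to argue precisely that the replayed $\omega$-actions form a prefix of the optimal $\omega$-action sequence---this is where the combination of batched end-of-frame updates and FIFO service is essential---and then pin down $N_\omega(t)-D_\omega(t)\le W$ despite the up-to-$W$-slot lag incurred because optimal actions are computed only at frame boundaries. Once the invariant $|\mathcal{Q}_\omega|\le W$ of the first paragraph and the counting identity $N_\omega(t)-D_\omega(t)=\max\{L_k,m\}$ are in hand, the stated inequalities follow by substituting the per-slot bounds $a_i,b_i\in[0,B]$ and $U\le U_{\max}$.
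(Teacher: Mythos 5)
Your proof is correct and takes essentially the same route as the paper's: both bound the backlog of unfulfilled optimal actions per event type by the maximum per-frame occurrence count (your recursion $L_{k+1}=\max\{L_k,n_\omega^{(k)}\}$ is exactly the paper's two-case, frame-by-frame induction giving $\max_j M_j\le W$), and then charge at most $B$ (resp.\ $U_{\max}$) per outstanding action. Your telescoping identity for the debt queues is just a slightly more formal packaging of the same cancellation the paper uses, so this is the same argument, cleanly written.
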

\begin{proof}
Time is divided into frames of size $W$ slots.  In frame 1, suppose that network event $\omega$ occurs $M_1$ times, in slots $t_1,\cdots,t_{M_1}$, respectively. Note that the action queue $\mathcal{Q}_\omega$ is empty throughout frame 1 and no action is taken under the Tracking Algorithm. Thus, the lengths of the two debt queues at the beginning of frame 2 are
\[
\small
\begin{split}
q_{i,\omega}(W)&=\sum_{m=1}^{M_1} \Big(b^*_i(t_m)-a^*_i(t_m)\Big)\le M_1B\le WB,\\
u_{\omega}(W)&=\sum_{m=1}^{M_1} U(\alpha_{t_m}^*,\omega)\le M_1U_{\max}\le WU_{\max}.
\end{split}
\]
Note that the above upper bounds not only apply to time $W$ but also any time within frame 1.

At the beginning of frame 2, there are $M_1$ actions in the action queue $\mathcal{Q}_{\omega}$: $\alpha^*_{t_1},\cdots,\alpha^*_{t_{M_1}}$, and these actions will be taken one by one whenever network event $\omega$ occurs. In frame 2, suppose that network event $\omega$ occurs $M_2$ times. We discuss two scenarios.
\begin{itemize}
\item{If $M_2\ge M_1$, then all the $M_1$ actions in the action queue $\mathcal{Q}_\omega$ are taken, and thus the debts owned in frame 1 are all cleared. Moreover, the new debt owned in frame 2 for the two debt queues is most $M_2B$ and $M_2U_{\max}$. Thus, at the beginning of frame 3, the remaining debt in the two debt queues is upper bounded by $M_2B$ and $M_2U_{\max}$, respectively. }

\item{ If $M_2<M_1$, then the debts owned upon the first $M_2$ occurrences of event $\omega$ in frame 1 are cleared. The new debt owned in frame 2 for the two debt queues is most $M_2B$ and $M_2U_{\max}$. Thus, at the beginning of frame 3, the remaining debt in the two debt queues is at most $M_2B+(M_1-M_2)B=M_1B$ and $M_2U_{\max}+(M_1-M_2)U_{\max}=M_1U_{\max}$, respectively. }
\end{itemize}
Therefore, in both of the above scenarios, we can conclude that at the beginning of frame 3
\[
\small
\begin{split}
q_{i,\omega}(2W)&\le \max\{M_1,M_2\}B\le WB,\\
u_\omega(2W)&\le \max\{M_1,M_2\}U_{\max}\le WU_{\max}.
\end{split}
\]
Note that the above upper bounds not only apply to time $2W$ but also any time within frame 2.

Similar argument applies to any of the subsequent frames $r\ge 3$: for any $t\in [(r-1)W, rW]$
\[
\begin{split}
q_{i,\omega}(t)&\le  B\max_{j\le r}M_j\le WB,\\
u_\omega(t)&\le U_{\max}\max_{j\le r}M_j\le WU_{\max}.
\end{split}
\]
This concludes our proof.
\end{proof}

\vspace{2mm}

For each $i\in\mathcal{N}$, let $\tau_i$ be the last time $t$ when $Q_{i}(t)=0$. Assume that $\tau_i$ is contained in frame $r$ and let $t_i=(r+1)W$ (i.e., the beginning of frame $r+1$). Clearly we have $t_i-\tau_i\le W$ and thus
\begin{equation}\label{eq:qb}
Q_i(t_i) \le Q_i(\tau_i) + \sum_{t=\tau_i}^{t_i-1}a_i(t)\le WB.
\end{equation}
Then it follows that
\[
\begin{split}
Q_i(T)&=Q_i(t_i)+\sum_{t=t_i}^{T-1}\Big(a_i(t)-b_i(t)\Big)\\
&\le WB+\sum_{t=t_i}^{T-1} \Big(a_i(t)+a^*_i(t)-a^*_i(t)-b_i(t)\Big) \\
& \le WB+\sum_{t=t_i}^{T-1} \Big(a_i(t)+b^*_i(t)-a^*_i(t)-b_i(t)\Big)\\
&= WB+\sum_{\omega\in\mathcal{W}}\sum_{t\in\mathcal{T}_{\omega}}\Big[\Big (b^*_i(t)-a^*_i(t)\Big) - \Big(b_i(t)-a_i(t)\Big)\Big].
\end{split}
\]
Here, the first inequality is due to \eqref{eq:qb} and the second inequality is due to Equation \eqref{eq:AQT}. The last equality regroups time slots according to the type of network event that occurred in each slot, where we define
\[
\mathcal{T}_{\omega}=\{t|t_i\le t\le T-1,~\omega_t=\omega\},~\forall \omega\in\mathcal{W}.
\]
Note that by the definition of the debt queue $q_{i,\omega}$ and Lemma \ref{lm:tracking-AQT}, we have
\[
\begin{split}
&\sum_{t\in\mathcal{T}_{\omega}}\Big[\Big (b^*_i(t)-a^*_i(t)\Big) - \Big(b_i(t)-a_i(t)\Big)\Big]\\
\le &q_{i,\omega}(T)-q_{i,\omega}(t_i)\le B W.
\end{split}
\]
Then it follows that
$Q_i(T)\le WB+|\Omega|BW$ for any $i\in\mathcal{N}$ and 
\[
\sum_i Q_i(T)\le NWB+|\Omega|NBW=O(W).
\]
Similarly, for utility regret, we have
\[
\begin{split}
\mathcal{R}_T &= \sum_{t=0}^{T-1}\Big[U(\alpha^*_t,\omega_t)-U(\alpha_t, \omega_t)\Big]\\
&=\sum_{\omega\in\Omega}\sum_{t\in \mathcal{T}'_\omega} \Big[U(\alpha^*_t,\omega)-U(\alpha_t, \omega)\Big]\\
&=\sum_{\omega\in\Omega} \Big[u_\omega(T)-u_\omega(0)\Big]\\
&\le |\Omega|WU_{\max}=O(W),
\end{split}
\]
where we define
\[
\mathcal{T}'_{\omega}=\{t|0\le t\le T-1,~\omega_t=\omega\},~\forall \omega\in\mathcal{W}.
\]
This completes the proof to Theorem \ref{thm:tracking-AQT}.

\section{Proof to Theorem \ref{thm:max-general}}\label{ap:max-general}
We divide the time horizon into frames of size $W$ slots, where the value of $W$ is to be selected later. Following the same line of argument as in Lemma \ref{lm:drift-penalty}, we can derive an upper bound for the $W$-slot drift-plus-penalty term:
\begin{equation}\label{eq:w-drift2}
\begin{split}
&\Phi(\mathbf{Q}(t+W))-\Phi(\mathbf{Q}(t)) - \sum_{\tau=t}^{t+W-1}VU(\alpha_t,\omega_t)\\
\le & \sum_{i=1}^N Q_i(t) \sum_{\tau=t}^{t+W-1}  \Big(a^*_i(\tau) - b^*_i(\tau)\Big) \\
&-  \sum_{\tau=t}^{t+W-1} VU(\alpha^*_t,\omega_t)+ 6NB^2 W^2.
\end{split}
\end{equation}
By the definition of $V_T$-constrained adversary, we have
\[
\sum_{\tau=t}^{t+W-1}  \Big(a^*_i(\tau) - b^*_i(\tau)\Big)\le V_T,~\forall t\in\mathcal{T}, i\in\mathcal{N}.
\]
Plugging the above inequality into \eqref{eq:w-drift2}, we have
\begin{equation}\label{eq:w-drift3}
\begin{split}
&\Phi(\mathbf{Q}(t+W))-\Phi(\mathbf{Q}(t)) - \sum_{\tau=t}^{t+W-1}VU(\alpha_t,\omega_t)\\
\le & V_T\sum_{i=1}^N Q_i(t) -  \sum_{\tau=t}^{t+W-1} VU(\alpha^*_t,\omega_t)+ 6NB^2 W^2.
\end{split}
\end{equation}
Let $T'$ be the time when the total queue length under the Drift-plus-Penalty algorithm reaches the peak, i.e., 
\[
T'\triangleq\arg\max_{t\in\mathcal{T}} \sum_i Q_i(t).
\]
For convenience, we also define $Q_{\max}$ to be the peak queue length under the Drift-plus-Penalty algorithm:
\[
Q_{\max}\triangleq \sum_i Q_i(T') = \max_{t\in\mathcal{T}} \sum_i Q_i(t).
\]
Without loss of generality, we assume that $W$ divides $T'$.  Summing \eqref{eq:w-drift3} over $t=0,W,2W,\cdots,T'-W$, we have
\begin{equation}\label{eq:general-1}
\begin{split}
&\Phi(\mathbf{Q}(T')) - \sum_{\tau=0}^{T'-1}VU(\alpha_t,\omega_t)\\
\le & V_T Q_{\max} \frac{T'}{W} -  \sum_{\tau=0}^{T'-1} VU(\alpha^*_t,\omega_t)+ 6NB^2 W^2\frac{T'}{W}.
\end{split}
\end{equation}
Note that
\[
\Phi(\mathbf{Q}(T')) = \frac{1}{2}\sum_{i} Q^2_i(T')\ge \frac{1}{2N}\Big(\sum_i Q_i(T')\Big)^2=\frac{Q^2_{\max}}{2N}.
\]
\[
\sum_{\tau=0}^{T'-1} \Big[U(\alpha_t,\omega_t)-U(\alpha^*_t,\omega_t)\Big]\le T'(U_{\max}-U_{\min}).
\]
Plugging the above two inequalities into \eqref{eq:general-1} and noticing that $T'\le T$, we have the following quadratic inequality w.r.t. $Q_{\max}$:
\[
\begin{split}
\frac{Q^2_{\max}}{2N}-\frac{V_T T}{W} Q_{\max}\le T(U_{\max}-U_{\min})+6NB^2 WT.
\end{split}
\]
Solving this inequality yields
\begin{equation}\label{eq:solve}
Q_{\max}\le \frac{c_1 V_T T}{W}+\sqrt{\frac{c_2V_T^2 T^2}{W^2}+c_3T(W+V)},
\end{equation}
where $c_1,c_2,c_3$ are constants independent of $W$, $T$ and $V_T$. Minimizing the right-hand side of \eqref{eq:solve} w.r.t. $W$, we have
\[
Q_{\max}=O\Big(\sqrt{V_T^{2\slash 3} T^{4\slash 3}+TV}\Big)=O\Big(V_T^{1\slash 3} T^{2\slash 3}+T^{1\slash 2}V^{1\slash 2}\Big),
\]
where the optimal value of $W$ is $W=c_4 V_T^{2\slash 3}T^{1\slash 3}$ for some constant $c_4>0$. As a result, we can conclude that
\[
\sum_i Q_i(T)\le Q_{\max}\le O\Big(V_T^{1\slash 3} T^{2\slash 3}+T^{1\slash 2}V^{1\slash 2}\Big).
\]

Next we analyze the utility regret achieved by the Drift-plus-Penalty algorithm. Without loss generality, we also assume that $W$ divides $T$. Summing \eqref{eq:w-drift3} over $t=0,W,2W,\cdots,T-W$, we have
\[
\begin{split}
\mathcal{R}_T\Big(\{\omega_0,\cdots,\omega_{T-1}\}\Big)&=\sum_{\tau=0}^{T-1} \Big[U(\alpha_t,\omega_t)-U(\alpha^*_t,\omega_t)\Big]\\
&\le \frac{V_T TQ_{\max}}{WV}+\frac{6NB^2WT}{V}.
\end{split}
\]
Plugging the bound on $Q_{\max}$ and the optimal value of $W$ into the above inequality, we can conclude that
\[
\mathcal{R}_T\Big(\{\omega_0,\cdots,\omega_{T-1}\}\Big)=O\Big(\frac{V_T^{2\slash 3}T^{4\slash 3}}{V}+\frac{V_T^{1\slash 3}T^{7\slash 6}}{V^{1\slash 2}}\Big).
\]

\section{Proof to Theorem \ref{thm:tracking-general}}\label{ap:tracking-general}
Time is divided into frames of size $W$ slots.  Without loss of generality, we assume that $W$ divided $T$ and the total number of frames is $R=T\slash W$.  We first introduce some notations.
\begin{itemize}
\item{Let $\alpha^W_t$, $\mathbf{Q}^W(t)$, $\mathbf{a}^W(t)$ and $\mathbf{b}^W(t)$ be the control action, the queue length vector, the arrival vector and the service vector in slot $t$ under the Tracking Algorithm with parameter $W>0$.}
\item{Let $\alpha^{*W}_t$, $\mathbf{Q}^{*W}(t)$, $\mathbf{a}^{*W}(t)$ and $\mathbf{b}^{*W}(t)$ be the control action, the queue length vector, the arrival vector and the service vector in slot $t$ under the solution to the modified problem \eqref{eq:track-shed} with parameter $W>0$.}
\item{Let $\alpha^*_t$ be the control action in slot $t$ under the optimal policy to \textbf{NUM}.}
\end{itemize}
It can be easily verified that Lemma \ref{lm:tracking-AQT} still holds in $V_T$-constrained networks. Then following the similar line of argument as in the proof to Theorem \ref{thm:tracking-AQT}, we have
\[
\small
\begin{split}
Q^W_i(T)&\le WB+\sum_{t=t_i}^{T-1} \Big(a^W_i(t)+a^W_i(t)-a^W_i(t)-b^W_i(t)\Big) \\
& \le WB+\sum_{t=t_i}^{T-1} \Big(a^W_i(t)+b^{*W}_i(t)-a^{*W}_i(t)-b^W_i(t)\Big)+\frac{V_T T}{W}\\
&= WB+|\Omega|BW+\frac{V_T T}{W},
\end{split}
\]
where the second inequality is due to the first constrain in \eqref{eq:track-shed}. Thus the total queue length is
\[
\sum_i Q_i(T)\le NBW+|\Omega|NBW+\frac{NV_T T}{W}=O\Big(W+\frac{V_T T}{W}\Big).
\]
Similarly, following the analysis in Theorem \ref{thm:tracking-AQT}, we can obtain that 
\[
\begin{split}
\sum_{t=0}^{T-1}\Big[U(\alpha^{*W}_t,\omega_t)-U(\alpha^W_t, \omega_t)\Big]\le |\Omega|WU_{\max}=O(W).
\end{split}
\]
Noticing that the optimal solution $\{\alpha^*\}_{t=0}^{T-1}$ to \textbf{NUM} is also a feasible solution to the modified problem \eqref{eq:track-shed}. Therefore, we have
\[
\sum_{t=0}^{T-1}U(\alpha^{*W}_t,\omega_t) \ge \sum_{t=0}^{T-1}U(\alpha^*_t, \omega_t),
\]
which implies that the utility regret satisfies
\[
\begin{split}
\mathcal{R}_T\Big(\{\omega_0,\cdots,\omega_{T-1}\}\Big) &=\sum_{t=0}^{T-1}\Big[U(\alpha^*_t,\omega_t)-U(\alpha^W_t, \omega_t)\Big]\\
&\le \sum_{t=0}^{T-1}\Big[U(\alpha^{*W}_t,\omega_t)-U(\alpha^W_t, \omega_t)\Big]\\
& = O(W).
\end{split}
\]
Setting the value of $W$ to be $W=\sqrt{TV_T}$ gives the desired result.
\end{document}